\title{Epidemic Outbreaks in Networks with Equitable or Almost-Equitable Partitions \thanks{This work has been partially supported by the European Commission within the framework of the CONGAS project FP7-ICT-2011-8-317672 (see \url{http://www.congas-project.eu})}} 
\newtheorem{remark}{Remark}
\def\lcm{\mathrm{ lcm}}
\def\bP{\mathbb P}
\def\eps{\varepsilon}
\def\eps{\varepsilon}
\def\diag{\mathop{\rm diag}}
\def\E#1{{\mathrm E}\left [{#1} \right ]}
\def\hB{\widehat{B}}
\definecolor{hanpurple}{rgb}{0.32, 0.09, 0.98}
\definecolor{electricultramarine}{rgb}{0.25, 0.0, 1.0}
\newcommand{\so}[1]{\textcolor{Black}{#1}}
\newcommand{\rev}[1]{\textcolor{Black}{#1}}
\newcommand{\fdp}[1]{\textcolor{Black}{#1}}
\newcommand{\ovp}{{\overline{p}}}
\author{S. Bonaccorsi\footnotemark[1]
\and S. Ottaviano\footnotemark[1]\ \footnotemark[2]
\and D. Mugnolo\footnotemark[3]
\and F. De Pellegrini \footnotemark[2]}
\begin{document}
\maketitle

\renewcommand{\thefootnote}{\fnsymbol{footnote}}

\footnotetext[1]{Mathematics Department, University of Trento,via Sommarive 14, 38123 Povo (Trento), Italy}
\footnotetext[2]{CREATE-NET, Via alla Cascata 56/d, 38123 Povo (Trento), Italy}
\footnotetext[3]{Institute of Analysis, University of Hagen, 58084 Hagen, Germany}
\slugger{mms}{xxxx}{xx}{x}{x--x}

\begin{abstract}
We study the diffusion of epidemics on networks that are partitioned into local
communities. The gross structure of hierarchical networks of this kind can be described by a \emph{quotient graph}. 
The rationale of this approach is that individuals infect 
\so{those belonging to the same community with higher probability than individuals in other communities.} In community models the nodal infection probability is thus expected to depend mainly on the interaction of a few, large interconnected clusters.
In this work, we describe the epidemic process as a continuous-time \so{individual-based} 
\rev{susceptible--infected--susceptible} (SIS) model \so{using  a first-order mean-field approximation. }

 \rev{A key feature of our model is that} the spectral radius of this smaller quotient graph (which only captures the macroscopic structure of the community network) is all we need to know in order to decide whether the overall healthy-state defines a \so{globally asymptotically stable} or an unstable equilibrium. \so{Indeed, the spectral radius is related to the epidemic threshold of the system.}

\so{Moreover we prove that, above the threshold, another steady-state exists that can be computed  using a lower-dimensional dynamical system} associated with the evolution of the process on the quotient graph. Our investigations are based on the graph-theoretical notion of \emph{equitable partition} and of its recent and rather flexible generalization, that of \emph{almost equitable partition}.
\end{abstract}

\begin{keywords}susceptible-infected-susceptible model, hierarchical networks, graph spectra, equitable and almost \so{ equitable} partitions\end{keywords}
\begin{AMS}\end{AMS}

\pagestyle{myheadings}
\thispagestyle{plain}
\markboth{Epidemic Outbreaks in Networks with Equitable or Almost-Equitable Partitions}{S. Bonaccorsi, S. Ottaviano, D. Mugnolo, F. De Pellegrini}

\section{Introduction}

\rev{Metapopulation models of epidemics consider the entire population partitioned into communities (also called households, clusters or subgraphs). Such models assume that each community shares a common environment or is defined by a specific relationship (see, e.g., ~\cite{Hanski, Masuda2010, Allen2007}).}
%

\so{Several authors also account for the effect of migration between communities \cite{Colizza2008,Poletto2013}. Conversely, the model we are interested in suits better the diffusion of computer 
viruses or stable social communities, which do not change during the infection period; hence we do not consider migration.}

In this work, we study the diffusion of epidemics over an undirected graph $G=(V,E)$ with edge set $E$ and node set $V$. The order of $G$, denoted $N$, is the cardinality of $V$, whereas the size of $G$ is the cardinality of $E$, denoted $L$. Connectivity of the graph $G$ is conveniently encoded in the $N \times N$ adjacency matrix $A$.
We are interested in the case of networks that can be naturally partitioned into $n$ communities: they are represented by a node set partition $\pi=\left\{V_1,...,V_n\right\}$, i.e., a sequence of mutually disjoint nonempty subsets of $V$, called cells, whose union is $V$. 

The epidemic model adopted in the rest of the paper is a continuous-time Markovian individual-based \rev{susceptible--infected--susceptible} (SIS) model. In the SIS model a node can be repeatedly infected, recover and yet be infected again. The viral state of a node $i$, at time $t$, is thus described 
 by a Bernoulli random variable $X_i(t)$, where we set $X_i(t) = 0$ if $i$ is healthy and $X_i(t) = 1$ if $i$ is infected. 
\so{Every node at time $t$ is either infected with probability $p_i(t) = \bP(X_i(t) = 1)$ or healthy (but susceptible) with probability $ 1 - p_i(t)=\bP(X_i(t) = 0) $}. Each node becomes infected following a Poisson process with rate $\beta$. Also, $i$ recovers following a Poisson process with rate  $\delta$. We further assume that infection and curing processes are independent~\cite{VanMieghem2009}. The ratio $\tau=\beta/\delta$ is called the \textit{effective spreading rate}.

Recently, also non Markovian types of epidemic spread were introduced in the literature, by choosing other than the exponential interaction time for infection and/or curing (see \fdp{for instance} \cite{NonM, genInf}). However, this seems to complicate the analysis considerably and \fdp{it is beyond the scope of this work}. 

Compared to the homogeneous case where the infection rate is the same for all pairs of nodes, in our model we consider two infection rates: the {\em intra-community} infection rate $\beta$ for infecting individuals in the same community and the {\em inter-community} infection rate $\eps\beta$  i.e., the rate at which individuals
among different communities get infected. \so{ We assume  $0<\eps < 1$, the customary physical interpretation being that infection across communities occur at a much smaller rate. Clearly the model can be extended to the case $\eps \geq 1$.} 

 \rev{ Further models where the epidemic process within communities is faster compared to the rate at which it spreads across communities, have  been studied in literature \cite{Bonaccorsi, Ball1997, Ball2008, Ross2010}.}

As described in~\cite{VanMieghem2009,VanMieghem2012b}, the SIS process developing on a graph with $N$ nodes 
is modeled as a continuous-time Markov process with $2^N$ states. The dynamics of the nodal infection 
probability is obtained by the corresponding Kolmogorov differential equations, but the resulting dynamical system consists of $2^N$ linear differential equations, not a viable approach for large networks. \so{Hence, often, an approximation of the SIS process is needed}. In this work we consider the first-order mean-field approximation NIMFA, proposed by Van Mieghem et al. in~\cite{VanMieghem2009, VanMieghem2012a, VanMieghem2011}. 

NIMFA replaces the original $2^N$ linear differential equations by $N$ non-linear differential equations; they represent the time-change of the infection probability of a node. As typical in first-order approximations of SIS dynamics, the only approximation required by NIMFA is that the infectious state of two nodes in the 
 network are uncorrelated, i.e., $\E{X_i(t)X_j(t)}= \E{X_i(t)}\E{X_j(t)}$. 

\subsection{Long-term prediction and epidemic threshold}\label{Epidemic Threshold}

For a network with finite order $N$, the exact SIS Markov process will always converge towards its \so{unique} absorbing state, \so{that is the zero-state where all nodes are healthy}. 
\so{The other states 
form a transient class, from which one can reach the zero-state with positive probability.} \fdp{ Because transitions from the zero-state have zero probability\footnote{Some models, as, e.g., the $\eps$-SIS model  \cite{VanMieghem2012b}, include the possibility of a nodal self-infection, thus making the whole process irreducible.} the stochastic model predicts that the virus will disappear from the network \cite{Pollett90}.
 }

\so{However the waiting time to absorption is a random variable whose distribution depends on the initial state of the system, and on the parameters of the model \cite{ NasselCLosed, Nassell2002}. In fact there is a critical value $\tau_c$ of the effective spreading rate $\tau= \beta/\delta$, whereby if  $\tau > \tau_c$ \rev{the time to absorption  grows exponentially in $N$, while for $\tau < \tau_c$  the infection vanishes exponentially fast in time}. 
  The critical value $\tau_c$ is often called the \textsl{epidemic threshold} \cite{VanMieghem2009, Bailey1975, Daley1999, Pastor2001}. } 

\fdp{Thus above the threshold, a typical realization of the epidemic process \rev{may experience} a very long waiting time before absorption to the zero-state. During such waiting time, the so-called {\em quasi-stationary distribution} can be used in order to approximate the probability distribution of occupancy of the system's states}. 
\so{The quasi-stationary distribution is obtained by conditioning on the fact that there is no extinction \cite{NasselCLosed,Nassell2002}. The quasi-stationary distribution can be regarded as the limiting conditional distribution, useful in representing the long-term behavior of the process {\em ``that in some sense terminates, but appears to be stationary over any reasonable time scale''}\cite{Pollett}.}


\rev{In fact, numerical simulations of SIS processes also reveal that, already for reasonably small networks $(N \geq 100)$ and when $\tau > \tau_c$, the overall-healthy state is only reached after an unrealistically long time. Hence, the indication of the model is that, in the case of real networks, one should expect that the extinction of epidemics is hardly ever attained \cite{VanMieghem2013, Draief2010}.}
For this reason the literature is mainly concerned with establishing the value of the epidemic threshold, being a key parameter behind immunization strategies related to the network protection against viral infection.

For an SIS process on graphs, $\tau_c$ depends on the spectral radius $\lambda_1(A)$ of the adjacency matrix $A$  \cite{Wang2003,VanMieghem2009}. NIMFA determines the epidemic threshold for the effective spreading rate as $\tau^{(1)}_c =\frac{1}{\lambda_{1}(A)}$, where the superscript (1) refers to the first-order mean-field approximation \cite{VanMieghem2009,VanMieghem2014}. \rev{Farther, in Thm. \ref{thresh}, we shall study the asymptotic behavior of the solutions of the NIMFA system, both above and below the threshold.} 

\so{We observe that, with respect to the exact Markovian SIS model, the state of nodes was recently proved to be not negatively correlated ~\cite{Cator_positive_correlations}. It is hence possible to prove that, due to the assumption of independence, NIMFA yields an upper bound for the probability of infection of each node, as well as a lower bound for the epidemic threshold, i.e., $\tau_c = \alpha \tau_c^{(1)}$ with $\alpha \geq 1$. }

\fdp{From the application standpoint, a key issue is to determine for which networks of given order NIMFA performs worst, meaning that $\alpha =\frac{\tau_c}{\tau_c^{(1)}}$ is largest. To this respect, further efforts have been made to satisfactorily quantify the accuracy of the model \cite{Accuracy}.}

\so{Finally, when $\tau > \tau^{(1)}_c$, a limiting occupancy probability appears as the second constant solution\footnote{We remember that all bounded trajectories of an autonomous first-order differential equation tend to an equilibrium, i.e., to a constant solution of the equation.} of the \so{NIMFA} non-linear system which exists, apart from the zero-vector solution. 
 {\rev Such} non-zero steady-state reflects well the observed  viral behavior \cite{VanMieghem2012a}: it can be seen as the {\rev analogous} of the quasi-stationary distribution of the exact stochastic SIS model.}




\begin{figure*}[t]	
	\centering
	\includegraphics[width=0.60\textwidth]{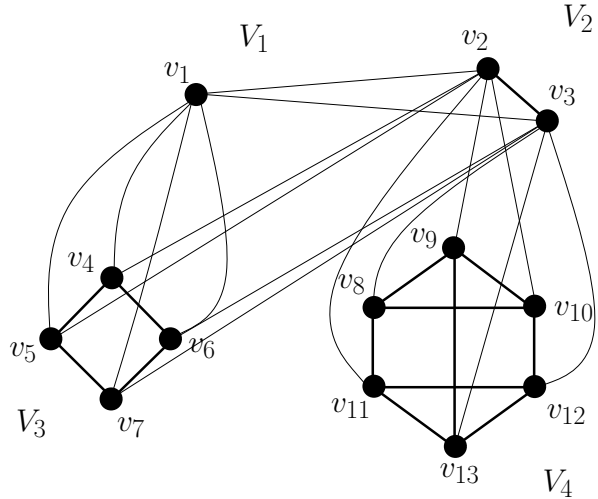}
\caption{A sample graph with equitable partition $V=\{\{v_1\},\{v_2,v_3\},\{v_3,v_4,v_5,v_6\},\{v_7,v_8,v_9,v_{10},v_{11},v_{12},v_{13}\}\}$.} \label{fig:fig11}
\end{figure*}
\subsection{Outline and main results}

 As already observed in \cite{Bonaccorsi}, the presence of communities generates a strong mixing effect at local level (e.g., the rate of infection inside a community tends to be homogeneous) as opposed to the much lower speed of mixing (i.e., much larger inhomogeneity) within the whole population. In \cite{Bonaccorsi} a complete graph represents the internal structure of each community. Such assumption appears natural for small community orders, for example, because the members of a small community usually know each other, as they may be friends, relatives, members of a common club, employees of the same department, etc. Moreover, given two connected communities, all of their nodes are mutually linked.

In this work, instead, 
\rev{we allow for the case of sparser community structures.} More precisely we consider an \textsl{equitable partition} of the graph. First of all this means that all nodes belonging to the same community have the same internal degree: formally the subgraph $G_i$ of $G(V,E)$ induced by $V_i$ is regular for all $i$'s (recall that $\pi=\left\{V_1,...,V_n\right\}$ is a partition of the node set $V$, which is assumed to be given a priori).  
Furthermore, for any two subgraphs $G_i,G_j$, each node 
in $G_i$ \rev{is connected with the same number of nodes in $G_j$}.

 The macroscopic structure of such a network 
can be described by the \emph{quotient graph} $G/\pi$, an oriented graph (possibly) featuring loops and multiple edges. The nodes of the quotient graph are the cells $V_1,\ldots,V_n$ in $G$. \so{In the last part of the work we extend our study to the case of \textit{almost equitable partitions}} that does not require any \so{specific} structural condition inside \rev{each $G_i$}.

Such network structure can be observed, e.g., in the architecture of some computer networks where clusters of clients connect to single routers, whereas the routers' network has a connectivity structure with nodes' degree constrained by the number of ports. Also, graphs representing  multi-layer networks  may be characterized using equitable and almost equitable partitions \cite{moreno2014}.

\so{In Sec. \ref{epid} we \rev{describe} the NIMFA differential equations and provide \rev{an} analysis of the global dynamics that allows us to identify the epidemic threshold $\tau_c^{(1)}$. In Sec.~\ref{sec:equi}, after defining equitable partitions, we introduce the so-called quotient matrix $Q$ that is related to $G/\pi$. \rev{Since} matrix $Q$ has the same spectral radius of adjacency matrix $A$, \rev{a novel expression is found} for the bound on the epidemic threshold $\tau_c^{(1)}$  as a function of network metrics.} Thus, a relation between the epidemic threshold and the spectral properties of the corresponding quotient matrix is obtained. 

\so{In Sec.~\ref{sec:InfDyn} we show under which conditions the matrix $Q$ can be used in order to express the whole epidemic dynamics by a system of $n$ equations instead of $N$, where $n < N$. We prove the existence of a positively invariant  set for the original system of $N$ differential equations that contains the equilibrium points. Moreover we show that, above the threshold, when a second non-zero equilibrium point appears, we can use the reduced system for its computation. }

In Sec.~\ref{AlmEq} we finally extend our investigations to the case of almost equitable partitions. We consider the special case of almost equitable partitions obtained by perturbing an equitable one, i.e., by adding/deleting 
a certain set of edges from an equitable partition. Thus, we relax the assumption that the internal structure of each community is regular. Even in this case we obtain a lower bound for the epidemic threshold.


\section{The epidemic model}\label{epid}


\rev{The NIMFA model describes the process of diffusion of epidemics on a graph by expressing the time-change 
of the probability $p_i$ that node $i$ is infected.}

Thus, node $i$ obeys a following differential equation~\cite{VanMieghem2009}
\begin{equation}\label{A}
\frac{ d p_i(t)}{dt} = (1-p_i(t))\beta\left(\sum_{j=1}^N a_{ij}p_j(t)\right) -\delta p_i(t), \; i=1,\ldots,N .
\end{equation}
In \eqref{A} the time-derivative of the infection probability of node $i$ consists of two competing processes: 
\begin{enumerate}
\item while healthy with probability $1-p_i(t)$, all infected neighbors, whose average number is 
$\sum_{j=1}^N a_{ij}p_j(t)$, infect node $i$ at rate $\beta$. 
\item while node $i$ is infected with probability $p_i(t)$, it is cured at rate $\delta$. 
\end{enumerate}
\rev{The following matrix representation of \eqref{A} holds}
\begin{equation}\label{mat}
\frac{dP(t)}{dt}= \beta AP(t)-\diag(p_i(t))(\beta AP(t)+ \delta u),
\end{equation}
where $P(t)=(\,p_1(t) \, p_2(t)  \dots  p_N(t)\,)^T$, $\operatorname{diag}(p_i(t))$ is the diagonal
matrix with elements $p_1(t), p_2(t),  \dots ,p_N(t)$ and $u$ is the all-one vector. From \eqref{mat}, 
considering $P(t)= \operatorname{diag}(p_i(t))u$, we can write
\begin{eqnarray}\label{mat2}
\frac{dP(t)}{dt}&&= \beta A P(t)-\delta \operatorname{diag}(p_i(t))u - \operatorname{diag}(p_i(t))\beta AP(t)\nonumber \\
                &&=(\beta A-\delta I)P(t) - \beta \operatorname{diag}(p_i(t)) A P(t).
\end{eqnarray}
\so{Clearly we study the system for $(p_1, \dots, p_N) \in I_N= [0,1]^N$. It can be shown that the system \eqref{mat2} is positively invariant in $I_N$, i.e. if $P(0) \in I_N$ then $P(t) \in I_N$ for all $t > 0$ \cite[Lemma ~3.1]{Stab}.}

\so{\rev{The} analysis of the global dynamics of \eqref{mat2} leads to identify the epidemic threshold $\tau^{(1)}_c$ in terms of the effective spreading rate $\tau=\beta/\delta$ where, as mentioned in Sec. \ref{Epidemic Threshold}},
\begin{equation}\label{tauc}
\tau^{(1)}_c = \frac{1}{\lambda_{1}(A)},
\end{equation}
with $\lambda_1(A)$ spectral radius of $A$. This critical value separates the absorbing phase from the endemic phase. We shall prove this, in Thm \ref{thresh}, by studying the stability of the equilibrium points of \eqref{mat2}, that are solutions of the equation
\begin{equation}\label{eqpoints}
P= \frac{\beta}{\delta} (I-\operatorname{diag}(p_i))A P.
\end{equation}
To this aim we shall adapt the results in \cite{Stab} to our individual-based SIS model. Let us denote by $f$ the right hand side of \eqref{mat2}, i.e., \eqref{mat2} can be re-written as a vector-valued differential equation
\begin{equation}\label{f}
\frac{dP}{dt}=f(P),
\end{equation}
where $\displaystyle f:[0,1]^N\rightarrow \mathbb{R}^N$ is a $C^{\infty}$ function.
 Let $P_0=0$ be the vector of all zero components, one can easily check that $P_0$ is an equilibrium point of the system (\ref{f}), i.e. $f(P_0)=0$. Also, the following holds
\begin{theorem}\label{thresh}
 If $\tau \leq 1/\lambda_1(A)$ then $P_0$ is a globally asymptotically stable equilibrium of (\ref{mat2}).\\ 
If $\tau > 1/\lambda_1(A)$, $P_0$ is unstable and there exists another equilibrium point  $P_{\infty}$ 
 that is globally asymptotically stable in $I_N - \left\{0\right\}$. 
\end{theorem}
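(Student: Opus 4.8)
The plan is to read off everything from two structural properties of the vector field $f$ of \eqref{f} on the cube $I_N$: it is \emph{cooperative} (off-diagonal Jacobian entries nonnegative) and \emph{strictly sublinear}. Since $A$ is symmetric I would first assume $G$ connected -- otherwise one works component by component, $\lambda_1(A)$ being the largest component spectral radius -- so that $A$ is irreducible and Perron--Frobenius gives an eigenvector $v\gg 0$ with $Av=\lambda_1(A)v$. A one-line computation gives $\partial f_i/\partial p_j=\beta a_{ij}(1-p_i)\ge 0$ for $i\neq j$ on $I_N$, so the induced semiflow $\Phi_t$ is order-preserving on $I_N$ and, by irreducibility, strongly order-preserving for $t>0$. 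I would also record the elementary identities $f(\alpha P)=\alpha f(P)+\alpha(1-\alpha)\beta\operatorname{diag}(p_i)AP$ (hence $f(\alpha P)\ge\alpha f(P)$ for $\alpha\in(0,1)$, strictly so wherever $p_i(AP)_i>0$), $f(u)=-\delta u$, and $f(\varepsilon v)=\varepsilon[(\beta\lambda_1(A)-\delta)v-\varepsilon\beta\lambda_1(A)\operatorname{diag}(v_i)v]$, which is $\ge 0$ for all small $\varepsilon>0$ exactly when $\tau\ge 1/\lambda_1(A)$.

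For $\tau\le 1/\lambda_1(A)$ I would use the linear Lyapunov function $W(P)=v^{T}P$. Because $v^{T}A=\lambda_1(A)v^{T}$,
\[
\dot W=v^{T}f(P)=(\beta\lambda_1(A)-\delta)\,v^{T}P-\beta\sum_{i=1}^{N}v_i\,p_i\,(AP)_i .
\]
When $\tau<1/\lambda_1(A)$ both terms are $\le 0$ and $\dot W<0$ for $P\neq 0$, so by positive definiteness of $W$ on the compact, positively invariant $I_N$ the origin is globally asymptotically stable. When $\tau=1/\lambda_1(A)$ the first term vanishes and $\dot W=-\beta\sum_i v_ip_i(AP)_i\le 0$, vanishing exactly on $E=\{P\in I_N:p_i(AP)_i=0\ \forall i\}$; I would then invoke LaSalle's invariance principle and show the largest invariant set contained in $E$ is $\{0\}$ -- on a trajectory inside $E$ the set of nodes with positive infection probability can only consist of vertices all of whose neighbours are healthy, and plugging this into \eqref{A} makes such a neighbour's probability strictly increase, contradicting membership in $E$ unless that set is empty.

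For $\tau>1/\lambda_1(A)$: the Jacobian at the origin is $\beta A-\delta I$, whose top eigenvalue $\beta\lambda_1(A)-\delta$ is positive, so $P_0$ is unstable. For the non-zero equilibrium I would fix $\varepsilon>0$ small with $f(\varepsilon v)\ge 0$ and $\varepsilon v\in I_N$; then $t\mapsto\Phi_t(\varepsilon v)$ is coordinatewise nondecreasing, bounded, hence converges to an equilibrium $P_\infty\ge\varepsilon v$, which is $\gg 0$ by irreducibility. Uniqueness of a strictly positive equilibrium would follow from a standard scaling argument: for two such equilibria $\bar P_1,\bar P_2$, the number $\alpha=\sup\{s>0:s\bar P_1\le\bar P_2\}$ cannot be $<1$ (strict sublinearity would then give $f(\alpha\bar P_1)\gg 0$, so $\Phi_t(\alpha\bar P_1)$ exceeds $\alpha\bar P_1$ by a positive multiple of $\bar P_1$ while staying $\le\bar P_2$, contradicting the supremum), and a symmetric argument gives $\bar P_1=\bar P_2$. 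Likewise $t\mapsto\Phi_t(u)$ decreases to an equilibrium which, being $\ge P_\infty\gg 0$, must equal $P_\infty$. Global attractivity on $I_N\setminus\{0\}$ then comes from a squeeze: any $P(0)\neq 0$ satisfies $\Phi_{t_0}(P(0))\gg 0$ for $t_0>0$ by strong monotonicity, hence $\varepsilon v\le\Phi_{t_0}(P(0))\le u$ for small $\varepsilon$, and monotonicity traps the trajectory between $\Phi_t(\varepsilon v)\uparrow P_\infty$ and $\Phi_t(u)\downarrow P_\infty$.

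The routine parts are the Jacobian and sublinearity identities and the Lyapunov estimate. I expect the genuine difficulties to be two. First, the critical case $\tau=1/\lambda_1(A)$: there the Lyapunov function is only negative semidefinite, so one must argue through LaSalle together with the graph-theoretic identification of the limit set, which is where connectedness of $G$ enters in an essential way. Second, above the threshold, the jump from ``a positive equilibrium exists'' to ``it is the global attractor of $I_N\setminus\{0\}$'': this rests on uniqueness of the positive equilibrium via strict sublinearity, and on strong monotonicity (irreducibility of $A$) to push an arbitrary non-zero initial condition into the order interval between $\varepsilon v$ and $u$; keeping the monotone-systems bookkeeping honest is the main technical burden.
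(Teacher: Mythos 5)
Your argument is correct, but it is a full, self-contained proof of something the paper simply outsources: the paper's own proof consists of writing the system as $\dot P = D_f P + F(P)$ with $D_f=\beta A-\delta I$ and $F_i(P)=-\beta\sum_j a_{ij}p_ip_j$, computing $\lambda_1(D_f)=\beta\lambda_1(A)-\delta$, and invoking Theorem~3.1 of Lajmanovich and Yorke \cite{Stab} (plus linearization for the instability claim). What you have written is essentially a re-derivation of that cited theorem for this particular vector field, and it uses the same underlying machinery (a linear Lyapunov function built from the Perron eigenvector below threshold, cooperativity/sublinearity and monotone-flow squeezing above it), so it is not a conceptually different route so much as an unpacking of the black box. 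What your version buys is that it makes visible exactly where irreducibility of $A$ enters --- a hypothesis the paper never states. Three small points to tighten: (i) $f(\eps v)\ge 0$ for all small $\eps>0$ holds exactly when $\tau>1/\lambda_1(A)$, not $\tau\ge 1/\lambda_1(A)$; at equality $f(\eps v)=-\eps^2\beta\lambda_1(A)\diag(v_i)v<0$ (this does not affect your use of the identity, which is only in the supercritical case). (ii) Your ``work component by component'' remark does not actually rescue the theorem for disconnected $G$: above threshold, an initial condition supported on a subcritical component dies out, so $P_\infty$ is \emph{not} globally attractive on $I_N\setminus\{0\}$; connectedness is a genuine hypothesis of the statement as written, left implicit by the paper as well. (iii) Strong order preservation needs a word about the faces $\{p_i=1\}$, where the $i$-th off-diagonal Jacobian row vanishes; this is harmless because $\dot p_i=-\delta<0$ there, so trajectories enter $[0,1)^N$ instantly, but it should be said.
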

\begin{proof}
We can rewrite the system  \eqref{f} in the following form (see \cite[p. 108]{DiffEqandDynamicalSystem})
\begin{equation}\label{Df} 
\dot{P}= D_f P + F(P),
\end{equation}
where  $D_f$ is the Jacobian matrix of $f$ at $P_0$ and $F(P)$  is a column vector whose $i$-th component is $-\beta \sum_{j=1}^N a_{ij} p_i p_j$.

From (\ref{mat2}) we have
\begin{equation*}
\left(Df(P_0)\right)_{ij}= \begin{cases}
\beta a_{ij} & i \neq j\\
- \delta & i=j
\end{cases}
\end{equation*}
 that is $D_f=\beta A- \delta I$.
\rev{Since adjacency matrix $A$ is real and symmetric its eigenvalues are real. Hence, the eigenvalues of $D_f$ are real as well and of the form }
\begin{equation*}
\lambda_i(D_f)=\beta\lambda_i(A)-\delta.
\end{equation*} 
In particular, let $\lambda_1(D_f)= \max_{i} \lambda_i(D_f)$, since the spectral radius of $A$ is positive we have 
\begin{equation*}
\lambda_1(D_f)=\beta \lambda_1(A)-\delta.
\end{equation*}
Now we can apply \cite[Thm.~3.1]{Stab} to the system \eqref{Df} and assert that when $\lambda_1(D_f) \leq 0$, i.e., $\tau \leq 1/\lambda_1(A)$, $P_0$ is a globally asymptotically stable equilibrium of \eqref{mat2}.
	
Conversely, if $\lambda_1(D_f)>0$, i.e. $\tau > 1/\lambda_1(A)$, there exists another equilibrium point $P_{\infty}$. $P_0$ and $P_{\infty}$ are the only equilibrium points in $I_N$ and $P_{\infty}$ is globally asymptotically stable in $I_N - \left\{0\right\}$.
	
Finally, since  $\tau >  1/\lambda_1(A)$, we have $\lambda_1(D_f)>0$. \rev{From Lyapunov's Linearization (or First) Method, it follows that} $P_0$ is an unstable equilibrium point in $I^N$.  
\end{proof}

\section{\so{Equitable Partitions}}\label{sec:equi}

\so{In this section we describe the SIS individual-based model for graphs with equitable partitions.}
The original definition of equitable partition is due to Schwenk \cite{Schwenk}.

\begin{definition}\label{def:eqpart}
Let $G=(V,E)$ be a graph. The partition $\pi=\left\{V_1,...,V_n\right\}$ of the node set $V$ is called \emph{equitable} if  
\so{ for all $i,j \in \left\{1, \dots ,n \right\}$}, there is an integer $d_{ij}$ such that 
\begin{equation*}
d_{ij}=\mbox{\rm deg}(v,V_j):=\# \left\{e \in E : e=\left\{v,w\right\}, w \in V_j \right\}.
\end{equation*} 
independently of $v \in V_i$.
\end{definition}

We shall identify the set of all nodes in $V_i$ \rev{with} the $i$-th {\em community} of the whole population.
In particular, each $V_i$ induces a subgraph of $G$ that is necessarily regular.

%
\begin{remark}\label{rem1}
We use the notation $\lcm$ and $\gcd$ to denote the least common multiple and greatest common divisor, respectively.
We can observe that the partition of a graph is equitable if and only if  
\begin{equation}
d_{ij} = \alpha \frac{\lcm(k_i,k_j)}{k_i}\nonumber
\end{equation}
where $\alpha$ is an integer satisfying $1 \leq \alpha \leq \mathop{\gcd}(k_i,k_j)$ and $k_i$ the number of nodes in $V_i$, for all $ i=1,...,n$.
\end{remark}

An equitable partition generates the \emph{quotient graph} $G/\pi$, which is a
\emph{multigraph} with cells as vertices and $d_{ij}$ edges between $V_i$ and $V_j$. 
For the sake of explanation, in the following we will identify $G/\pi$ \rev{with the} 
(simple) 
graph having the same cells vertex set, and where an edge exists between 
$V_i$ and $V_j$ if at least one exists in the original multigraph. We shall denote by $B$ the adjacency matrix of the graph $G/\pi$.
\begin{remark}\label{rem2}
In 
\cite{Bonaccorsi} 
 \rev{it has been considered the special case} when each community has a clique structure, i.e, $d_{ii}=k_i-1$ for all $i=1,...,n$. Moreover all nodes belonging to two linked 
communities $i$ and $j$ are connected, $d_{ij}=k_j$.  By considering the theory of equitable partition, 
we generalize the cited work and consider any kind of regular graph to represent the 
internal structure of each community. Moreover, unlike before, if two communities $i$ and $j$ 
are connected, each node in community $i$ is connected with \rev{$d_{ij}\leq k_j$ nodes in community $j$.}
\end{remark}

\subsection{Example} \label{subset:example}

\rev{Let us assume that the adjacency matrix $B$ of the quotient graph is given and that,}
for any \so{$i,j \in \left\{1, \dots, n\right\}$}, $b_{ij} \not=0$ implies $d_{ij} = k_j$, i.e., each node in $V_i$ is connected with every node inside $V_j$.
We can explicitly write the adjacency matrix $A$ in a block form. Let
%
$C_{V_{i}}=(c_{ij})_{k_i \times k_i}$ be the adjacency matrix of the 
subgraph induced by $V_i$
and $J_{k_i \times k_j}$ is an all ones $k_i \times k_j$ matrix; then
\begin{equation}\label{e:**}
A=
\begin{bmatrix} 
C_{V_1}&\varepsilon J_{k_1\times k_2}b_{12}&.&.& \varepsilon J_{k_1 \times k_n}b_{1n} \\
\varepsilon J_{k_2\times k_1}b_{21}&C_{V_2}&.&.& \varepsilon J_{k_2 \times k_n}b_{2n}\\
.&.&.&.&.\\
.&.&.&.&.\\
.&.&.&.&C_{V_n}
\end{bmatrix}
\end{equation}

\rev{We observe that \eqref{e:**} represents a block-weighted version of the adjacency matrix $A$. 
The derivation of NIMFA for the case of two different infection rates, considered in this paper, results 
in the replacement of the unweighted adjacency matrix in the NIMFA system \eqref{mat2} with its weighted 
version (see \cite{scoglio} for a deeper explanation)}.

\subsection{The quotient matrix}\label{sec:Q}

\so{We search for a smaller matrix $Q$ that contains the relevant information for the evolution of the system.}
Such a matrix is the \emph{quotient matrix} of the equitable partition. 
\rev{ In Prop. \ref{cor} we will see that} $Q$ and $A$ have the same spectral radii. As a consequence, we can compute its spectral radius in order to estimate the epidemic threshold, instead of computing \rev{the spectral
radius of matrix $A$.}

The quotient matrix $Q$ can be defined for any equitable partition: in view of the internal structure of a graph with an equitable partition, it is natural to consider the cell-wise average value of a function on the node set, that is to say the projection of the node space into the subspace of cell-wise constant functions. 
\begin{definition}\label{def:proj}
Let $G=(V,E)$ a graph. Let $\pi = \{V_i,\ i = 1, \dots, n\}$ be any partition of the node set $V$, let us consider the $n \times N$ matrix $S=(s_{iv})$, 
where
\begin{equation*}
s_{iv}=\begin{cases}
\frac{1}{\sqrt{|V_i|}} & \text{ $v \in V_i$}\\
0 & \text{otherwise}.
\end{cases}
\end{equation*}
The \emph{quotient matrix} of $G$ (with respect to the given partition) is 
$$Q:=SAS^T.$$
\end{definition}
Observe that by definition $SS^T=I$. 
 
In the case of the example in Sec. \ref{subset:example} the form of $Q$ is rather simple:
\begin{equation*}\label{eq:qii}
q_{ii}=
\sum_{h=1}^{k_i}
\left(\frac{1}{\sqrt{k_i}}\right)^2\sum_{k=1}^{k_i}(C_{V_i})_{kh}=\frac{1}{k_i}\sum_{h,k=1}^{k_i}(C_{V_i})_{kh}
\end{equation*}
and 
\begin{equation*}\label{eq:qij}
q_{ij}=
\frac{1}{\sqrt{k_i k_j}}\sum_{z \in V_i, \\ l \in V_j} a_{zl}= \sqrt{k_ik_j}\varepsilon b_{ij}.
\end{equation*}
Hence we obtain that 
\begin{equation*}\label{Qspec}
 Q=\diag(d_{ii})+ (\sqrt{k_ik_j}\varepsilon b_{ij})_{i,j=1,...n},
\end{equation*}
where $d_{ii}=\frac{1}{k_i}\sum_{h,k=1}^{k_i}(C_{V_i})_{kh}$ is the internal degree of the subgraph induced by $V_i$.

In the case of general equitable partitions,  
the expression for $Q$ writes
\begin{equation*}\label{eq:q}
 Q=\diag(d_{ii})+ (\sqrt{d_{ij}d_{ji}}\varepsilon b_{ij})_{i,j=1,...n}.
\end{equation*}

There exists a close relationship between the spectral properties of $Q$ and that of $A$. Being the order of $Q$ smaller of that of $A$, a result in
\cite{godsil} basically shows that $\sigma(Q)\subseteq\sigma(A)$, \rev{where with $\sigma(A)$ we refer, hereafter, to the spectrum of a square matrix $A$. Furthermore it holds the following} 
\begin{proposition}\label{cor}
Let $G=(V,E)$ a graph. Let $\pi = \{V_i,\ i = 1, \dots, n\}$ be an equitable partition of the node set $V$.
The adjacency matrix $A$ and the quotient matrix $Q$  have the same spectral radius, i.e.
\[
\lambda_1(Q)=\lambda_1(A).
\]
\end{proposition}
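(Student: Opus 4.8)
The plan is to prove the two inclusions $\sigma(Q)\subseteq\sigma(A)$ and $\lambda_1(A)\in\sigma(Q)$ separately, and then combine them with the fact that $\lambda_1(A)$ is the largest eigenvalue of $A$ in absolute value. The first inclusion is already quoted from \cite{godsil}, but the cleaner route is to observe directly that an equitable partition makes the column space of $S^T$ an $A$-invariant subspace: the defining property $d_{ij}=\mathrm{deg}(v,V_j)$ for all $v\in V_i$ says precisely that $A S^T = S^T Q$ (up to the normalization built into $S$). Indeed, if $e_i\in\bR^n$ is the $i$-th coordinate vector, then $S^T e_i$ is (a multiple of) the indicator of $V_i$, and $A$ applied to an indicator of $V_j$ is constant on each cell by equitability, which is exactly what $A S^T = S^T Q$ encodes. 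Hence if $Qx=\mu x$ with $x\neq0$, then $A(S^Tx)=S^TQx=\mu S^Tx$, and $S^Tx\neq0$ because $SS^T=I$; thus $\mu\in\sigma(A)$.

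Next I would establish $\lambda_1(A)\in\sigma(Q)$. The key tool is the Perron--Frobenius theorem: since $A$ is the (weighted, nonnegative, symmetric) adjacency matrix of a connected graph, $\lambda_1(A)$ is a simple eigenvalue with an eigenvector $w$ that can be chosen strictly positive. I claim $w$ lies in the column space of $S^T$, i.e. $w$ is cell-wise constant. To see this, note $S^T S$ is the orthogonal projection onto that subspace, and from $AS^T=S^TQ$ one gets, by taking transposes and using symmetry of $A$ and $Q$, that $SA = QS$, hence $S A w = Q S w$, so $\lambda_1(A)\, Sw = Q(Sw)$. If $Sw\neq0$ this already gives $\lambda_1(A)\in\sigma(Q)$ and we are done; and $Sw\neq0$ because $Sw$ is, up to positive normalization, the vector of cell-sums of the strictly positive vector $w$, which has all entries strictly positive. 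So in fact no appeal to simplicity of the Perron eigenvalue is even needed for this direction.

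Finally I would combine the pieces. From $AS^T=S^TQ$ and $SA=QS$ with $SS^T=I$ we have shown $\sigma(Q)\subseteq\sigma(A)$, so $\lambda_1(Q)\le\lambda_1(A)$ (the spectral radius of $Q$ is attained at a real eigenvalue since $Q$ is symmetric, and every eigenvalue of $Q$ is an eigenvalue of $A$, hence bounded in modulus by $\lambda_1(A)$). Conversely the previous paragraph exhibits $\lambda_1(A)$ as an eigenvalue of $Q$, so $\lambda_1(Q)\ge\lambda_1(A)$. Equality follows.

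The main obstacle is making the identity $AS^T=S^TQ$ precise with the chosen normalization $s_{iv}=1/\sqrt{|V_i|}$, and in particular checking that the definition $Q=SAS^T$ is consistent with the entrywise formulas $q_{ii}=d_{ii}$, $q_{ij}=\sqrt{d_{ij}d_{ji}}\,\eps b_{ij}$ given above; once $AS^T=S^TQ$ is verified by a short computation using the equitability condition (the row sums of the relevant blocks of $A$ being the constants $d_{ij}$), everything else is immediate. A secondary point is to note explicitly that we may assume $G$ connected: if not, one argues componentwise, since both $A$ and the partition restrict to each connected component and the spectral radius of the whole is the maximum over components.
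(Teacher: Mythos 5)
Your proof is correct, but it is worth noting that the paper does not actually prove this proposition at all: its ``proof'' consists of the single line ``See \cite[art.~62]{Graph}''. What you have written is essentially the standard argument that the cited reference (and Godsil--McKay) would supply, namely the intertwining relation $AS^T=S^TQ$, which I have checked: for $v\in V_j$ one has $(AS^T)_{vi}=\tfrac{1}{\sqrt{k_i}}\sum_{w\in V_i}a_{vw}=\tfrac{\eps\, d_{ji}}{\sqrt{k_i}}$ by equitability (the weight $\eps$ being constant on inter-cell edges does not disturb this), while $(S^TQ)_{vi}=\tfrac{1}{\sqrt{k_j}}q_{ji}=\tfrac{1}{\sqrt{k_j}}\sqrt{k_j/k_i}\,\eps\,d_{ji}$, and these agree; moreover $\sqrt{k_j/k_i}\,d_{ji}=\sqrt{d_{ij}d_{ji}}$ via the edge-count identity $k_id_{ij}=k_jd_{ji}$, so your intertwining relation is consistent with the paper's entrywise formulas for $Q$. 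Injectivity of $S^T$ (from $SS^T=I$) then gives $\sigma(Q)\subseteq\sigma(A)$, and $SAw=QSw$ with $Sw\neq 0$ gives the reverse inequality for the spectral radius; both steps are sound. Two small remarks. First, your appeal to a \emph{strictly} positive Perron vector is more than you need: for any nonnegative symmetric $A$ there is a nonnegative nonzero eigenvector $w$ for $\lambda_1(A)$, and already $w\ge 0$, $w\neq 0$ forces $(Sw)_i>0$ for the cell containing a positive entry, so $Sw\neq 0$; this removes the need for connectivity entirely. Second, and relatedly, your fallback ``argue componentwise'' for disconnected $G$ is more delicate than you suggest, since a cell of an equitable partition may straddle several components and the quotient matrix need not split as a corresponding direct sum with the stated normalization; the nonnegative-eigenvector version of the argument sidesteps this cleanly, so I would drop the componentwise remark rather than try to repair it.
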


\begin{proof} \rev{See \cite[art. 62]{Graph}}.
 %
%
\end{proof}

\subsection{Complexity reduction}

\fdp{ Prop.~\ref{cor} further details that, once the network structure is encoded in the connectivity of a quotient graph $Q$, then the epidemic threshold $\tau_c^{(1)}$ is expressed by the spectral radius of $Q$.}

 \fdp{Now, since the order of $Q$ is smaller than the order of $A$, this can provide a computational advantage. The complexity reduction can be evaluated easily, e.g, in the case of the power iteration method \cite{MatAn}. The power iteration method is a numerical technique for approximating a dominant eigenpair of a diagonalizable matrix $L$, 
 using the following iteration 
\begin{equation*}
y^{h}=L \, y^{h-1}, \quad h=1,2,\ldots 
\end{equation*}
for a given initial vector $y^{0}$. As the iteration step $h$ increases, $y^{h}$ approaches a vector which 
is proportional to a dominant eigenvector of $L$. If we order the eigenvalues of $L$ such as as $|\lambda_1(L)|\geq |\lambda_2(L)| \geq \ldots \geq|\lambda_n(L)|$, the rate of convergence of the method is ruled by $ |\lambda_2(L)|/|\lambda_1(L)|$.}

 \fdp{In our case, for the Perron-Frobenius Theorem the dominant eigenvalue $\lambda_1(A)$ is positive and by Prop. \ref{cor}, $\lambda_1(A)=\lambda_1(Q)$. Furthermore $\sigma(Q) \subseteq \sigma(A)$, hence  $\max_{i \geq 2} |\lambda_i (A)| \geq \max_{i \geq 2} |\lambda_i (Q)|$: this means that 
the convergence of power iteration for matrix $Q$ is never slower than for matrix $A$. Finally, it is immediate that at each step the computational complexity is $O(n^2)$ for $Q$ whereas for $A$ it is  $O(N^2)$.
}
\subsection{A lower bound for $\tau^{(1)}_c$}
We can write $Q=D + \hB$, where $D=\diag(d_{ii})$ and $\hB=(\sqrt{d_{ij}d_{ji}}\varepsilon b_{ij})_{i,j=1,...n}$. %
\so{By the Weyl's theorem \cite{MatAn} we have}
\begin{equation}\label{lowbound}
\lambda_1(Q) \leq \lambda_1(D)+\lambda_1(\hB)=\max_{1\le i\le n} d_{ii} + \lambda_1(\hB).
\end{equation} 
\rev{From  (\ref{tauc}) and by Proposition \ref{cor} }
\begin{equation*}
\tau^{(1)}_c = 1/\lambda_1(A)=1/\lambda_1(Q),  
\end{equation*}
\rev{thus a lower bound for the epidemic threshold can be derived from (\ref{lowbound})}
\begin{equation}\label{low2}
\tau^{(1)}_c \geq  \tau^\star  
= \min_{i} \frac{1}{d_{ii} + \lambda_1(\hB)} ,
\end{equation}

In applications, when designing or controlling a network, this value can be adopted to determine a safety region 
$\{\tau \le \tau^\star\}$ for the effective spreading rate that guarantees the extinction of epidemics.

 \fdp{Fig.~\ref{fig:low} reports on the comparison of the lower bound and the actual threshold \rev{value}: \rev{it refers to} the case of a sample equitable partition composed of interconnected rings for increasing values of the community order.}

\rev{We observe that obtaining a lower bound for $\tau_c^{(1)}$ is meaningful because $\tau_c^{(1)}$ is itself a lower bound for the epidemic threshold $\tau_c$ of the exact stochastic model, i.e. $\tau_c = \alpha \tau_c^{(1)}$ with $\alpha \geq 1$,  as anticipated in Sec.~\ref{Epidemic Threshold}}. In fact, smaller values of the effective spreading rate $\tau$, namely $\delta>\beta/\tau_c^{(1)}$, correspond, in the exact stochastic model, to a region where the decay towards the healthy state decreases exponentially fast in time. %
\rev{By forcing the effective spreading rate below $\tau^*$, one ensures that the epidemic will go extinct in a reasonable time frame (we recall that, above the threshold, the overall-healthy state is only reached after an unrealistically long time.).}
\begin{figure}[t]	
\centering
\includegraphics[width=0.6\textwidth]{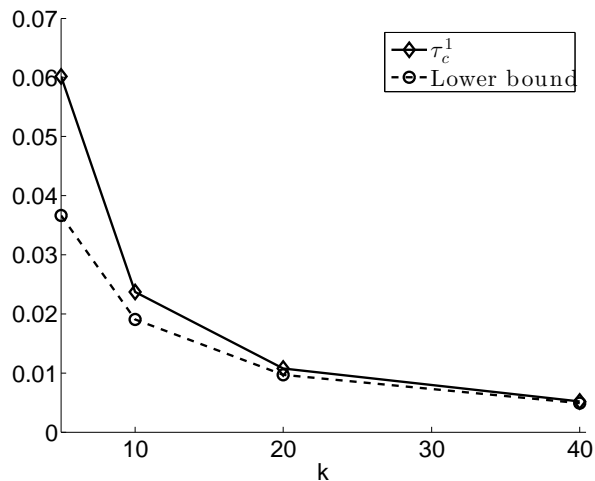}
\caption{\so{Lower bound \eqref{low2} versus epidemic threshold: comparison for different values of $k$ in a $40$-communities network. The internal structure of each community is a ring and $d_{ij}=2$ for all $i,j=1, \ldots, n$.}}\label{fig:low}
\end{figure}

\so{Equality can be attained in \eqref{low2}: consider for instance the graph described by the adjacency matrix $A$ in \eqref{e:**}. Furthermore, we may require that all $V_i$'s  have the same number of nodes $k_i=k$ and same internal degree $d_{ii}=d$, $i=1,\ldots,n$. In this case $Q= d\,{\rm Id}_n + \hB$, where $\hB:=(k \varepsilon b_{ij})_{i,j=1,...n}$, and
\begin{equation*}
\lambda_1(Q)= d + k\varepsilon \lambda_1(B),
\end{equation*} 
which is the exact value of $\lambda_1(A)$ and consequently of $\tau^{(1)}_c$.}

\section{Infection Dynamics for Equitable Partitions}\label{sec:InfDyn}

\so{In this section we show under which conditions matrix $Q$ can be used  in order to express the epidemic dynamics 
introduced in \eqref{mat2}. This allows us to describe the time-change of the infection probabilities by a system of $n$ 
differential equations instead of $N$. }

\so{
\begin{theorem}\label{reduction}
Let $G=(V,E)$ a graph and $\pi = \{V_j,\ j = 1, \dots, n\}$  an equitable partition of the node set $V$. Let $G_j$ be the subgraph of $G=(V,E)$ induced by cell $V_j$. 
If $p_h(0)=p_w(0)$ for all $h, w \in G_j$ and \rev{for all $j=1, \dots, n$}, then $p_h(t)=p_w(t)$  for all $t > 0$. In this case we can reduce the number of equations representing the time-change of infection probabilities using the quotient matrix $Q$. 
\end{theorem}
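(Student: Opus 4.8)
The plan is to recognise the first claim as an invariance (lumpability) property of the NIMFA vector field $f$ of \eqref{mat2}--\eqref{f}. Let $\cU:=\{P\in\mathbb R^{N}:\ p_h=p_w\ \text{whenever}\ h,w\ \text{lie in the same cell of}\ \pi\}$ be the $n$-dimensional subspace of cell-wise constant vectors; equivalently, $\cU$ is the range of $S^{T}$ with $S$ the matrix of Definition~\ref{def:proj}. Since $f$ is $C^{\infty}$ on $[0,1]^{N}$, hence locally Lipschitz, the initial value problem for \eqref{f} has unique solutions, so it is enough to prove that $f$ is tangent to $\cU$, i.e.\ $f(\cU\cap[0,1]^{N})\subseteq\cU$. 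Once this is shown, a trajectory issued from $P(0)\in\cU$ — which is exactly the hypothesis $p_h(0)=p_w(0)$ on each cell — remains in $\cU$ for all $t>0$, and that is precisely the assertion $p_h(t)=p_w(t)$.

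To verify the tangency, fix $P\in\cU$ and write $p_v=c_j$ for every $v\in V_j$ and every $j$. For $v\in V_j$,
\[
(AP)_v=\sum_{w}a_{vw}p_w=\sum_{k=1}^{n}\Bigl(\sum_{w\in V_k}a_{vw}\Bigr)c_k=\sum_{k=1}^{n}d_{jk}c_k,
\]
where the last equality is exactly Definition~\ref{def:eqpart}: $\deg(v,V_k)=d_{jk}$ does not depend on the representative $v\in V_j$. Hence $AP$ is again cell-wise constant, $AP\in\cU$, whence $(\beta A-\delta I)P\in\cU$; and since the entrywise product of two cell-wise constant vectors is cell-wise constant, $\beta\,\diag(p_i)AP\in\cU$ as well. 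Therefore $f(P)\in\cU$, which proves the tangency and hence the invariance of $\cU$. The same computation applies verbatim to the $\varepsilon$-weighted adjacency matrix of \eqref{e:**} used in the two-rate model, with $d_{jk}$ replaced by the corresponding (still $v$-independent) weighted cell degree.

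For the second claim, substituting $p_v(t)=c_j(t)$, $v\in V_j$, into \eqref{A} produces the closed system of $n$ equations $\dot c_j=(1-c_j)\,\beta\sum_{k=1}^{n}d_{jk}c_k-\delta c_j$, i.e.\ $\dot C=(\beta M-\delta I)C-\beta\,\diag(c_j)\,MC$ with $C=(c_1,\dots,c_n)^{T}$ and $M:=(d_{jk})_{j,k=1}^{n}$. The step I would treat with most care is relating $M$ to the quotient matrix $Q$: unlike $Q$, the matrix $M$ need not be symmetric, but the handshake identity $|V_i|\,d_{ij}=|V_j|\,d_{ji}$ valid for equitable partitions yields $M=\Delta^{-1}Q\Delta$ with $\Delta:=\diag\bigl(\sqrt{|V_1|},\dots,\sqrt{|V_n|}\bigr)$ (in the variables $Y:=SP$ the matrix $Q$ itself appears, up to a residual diagonal rescaling in the nonlinear term). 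Hence $\sigma(M)=\sigma(Q)$ and $\lambda_1(M)=\lambda_1(Q)=\lambda_1(A)$ by Proposition~\ref{cor}, so the reduced $n$-dimensional model carries the same epidemic threshold $1/\lambda_1(A)$ as the full system of $N$ equations; by the invariance of $[0,1]^{N}$ recalled after \eqref{mat2}, $C(t)$ stays in $[0,1]^{n}$ for all $t$. The argument is otherwise routine: the only genuine bookkeeping is the equitable-partition degree identities and, in the two-rate model, the factors $\varepsilon$.
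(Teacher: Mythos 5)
Your proof is correct, and it takes a genuinely different---and considerably more economical---route than the paper. The paper works with the deviations $p_h(t)-\overline{p}_j(t)$ from the cell averages, derives (through about a page of algebra) a closed system for these deviations, and then observes that the identically zero function is the unique solution of that system under the hypothesis $p_h(0)=p_w(0)$; the reduced system \eqref{eq:red_sys_1} is then read off on the invariant set $M$. You instead verify directly that the NIMFA vector field maps the subspace of cell-wise constant vectors into itself, which follows in two lines from the defining identity $\sum_{w\in V_k}a_{vw}=d_{jk}$ of an equitable partition (so $AP$ is cell-wise constant whenever $P$ is, and the Hadamard product of cell-wise constant vectors is cell-wise constant), and you conclude by uniqueness of solutions for the locally Lipschitz field $f$. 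Both arguments ultimately rest on the same two ingredients---the $v$-independence of $\deg(v,V_k)$ and ODE uniqueness---but yours isolates them cleanly and avoids the bookkeeping entirely; it is essentially the standard ``lumpability/invariant subspace'' argument. Your handling of the reduced system is also consistent with the paper: your $M=\Delta^{-1}Q\Delta$ is exactly the paper's $\widetilde Q=\operatorname{diag}(1/\sqrt{k_j})\,Q\,\operatorname{diag}(\sqrt{k_j})$ once the $\varepsilon$-weights are inserted (which you flag yourself), and the similarity to $Q$ gives $\sigma(M)=\sigma(Q)=\{\lambda_1(A)\}\cup\dots$ as claimed. The one thing the paper's longer computation buys that your argument does not is quantitative information on how the deviations $p_h-\overline p_j$ evolve for initial data \emph{outside} the invariant set (the authors exploit this after the theorem to show exponential contraction toward $M$ in the clique case, cf.\ equation \eqref{compexp} and Fig.~\ref{fig:averaged}); your invariance argument is silent on that, but it is not needed for the statement of Theorem~\ref{reduction} itself.
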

}

\begin{proof}
Let $\overline{p}_j(t)= \frac{1}{k_j}\sum_{h \in G_j} p_h(t)$ be
 the average value of the infection probabilities at time $t$ of nodes in $G_j$. 
Then starting from \eqref{mat2}, we can write a new system of differential equations
\begin{eqnarray}\label{mean}
 &&\frac{d \left(p_h(t) - \overline{p}_j(t)\right)}{dt}= - \delta (p_h(t)- \overline{p}_j(t))+ \beta (1-p_h(t)) \sum_{z=1}^N a_{h z} p_z (t)\nonumber\\ 
&&\hskip32mm - \frac{1}{k_j} \beta \sum_{l \in G_j} (1-p_l(t))\sum_{z=1}^N a_{lz}p_z(t), \qquad  \forall h \in G_j,  \quad j=1, \ldots, n. 
\end{eqnarray}
\rev{From \eqref{mean} we have
\begin{align*} 
\frac{d \left(p_h(t) - \overline{p}_j(t)\right)}{dt} &= - \delta (p_h(t)-\overline{p}_j(t)) + \beta \left(\sum_{m=1}^n \sum_{z \in G_m} a_{hz} p_z(t)- \frac{1}{k_j} \sum_{l \in G_j}\sum_{m=1}^n \sum_{z \in G_m} a_{lz} p_z(t)\right)\\ 
&  - \beta \left(p_h(t) \sum_{m=1}^n \sum_{z \in G_m} a_{hz} p_z(t) - \frac{1}{k_j} \sum_{l \in G_j} p_l(t) \sum_{m=1}^n \sum_{z \in G_m} a_{lz} p_z(t) \right),
\end{align*}
that can be written as
\begin{align*} 
\frac{d \left(p_h(t) - \overline{p}_j(t)\right)}{dt} &=  - \delta (p_h(t)-\overline{p}_j(t)) + \beta \left(\frac{1}{k_j}  \sum_{l \in G_j}  \sum_{m=1}^n \sum_{z \in G_m} \left(a_{hz}-a_{lz}\right)p_z(t)\right)\\
                & -\beta \frac{1}{k_j} \sum_{l \in G_j} \sum_{m=1}^n \sum_{z \in G_m}  \left( a_{hz} p_h(t)- a_{lz} p_l(t)\right) (p_z(t)  -\ovp_m(t))\\
								& - \beta\frac{1}{k_j} \left (  \sum_{l \in G_j} \sum_{m=1}^n \sum_{z \in G_m} \left(a_{hz}p_h(t)- a_{lz} p_l(t) \right)\right) \ovp_m(t).
\end{align*}
Whence, since  $\sum_{z \in G_m} a_{hz} = d_{jm}$, for $h \in G_j$ and for all $m=1, \dots, n$, we have
\begin{align}\label{mean11} 
\frac{d \left(p_h(t) - \overline{p}_j(t)\right)}{dt} &=  - \left[ \sum_{m=1}^n \beta d_{jm} \overline{p}_m(t)+ \delta \right](p_h(t)- \overline{p}_j(t)) \\
                      & + \beta \frac{1}{k_j}  \sum_{l \in G_j}  \sum_{m=1}^n \sum_{z \in G_m} \left(a_{hz}-a_{lz}\right)p_z(t)\nonumber \\ 
											& -  \beta \frac{1}{k_j} \sum_{l \in G_j} \sum_{m=1}^n \sum_{z \in G_m}  \left( a_{hz} p_h(t)- a_{lz} p_l(t)\right) (p_z(t)  -\ovp_m(t)). \nonumber
\end{align}
Now, we note that
$$
-   \frac{1}{k_j} \sum_{l \in G_j} \sum_{m=1}^n \sum_{z \in G_m}  \left( a_{hz} p_h(t)- a_{lz} p_l(t)\right) (p_z(t)  -\ovp_m(t))
$$
can be written as
$$
 - \frac{1}{k_j}  \sum_{l \in G_j}  \sum_{m=1}^n \sum_{z \in G_m} \left((p_h(t)-\ovp_j(t))a_{hz} - (p_l(t) - \ovp_j(t))a_{lz}\right)\left(p_z(t) - \ovp_m(t)\right)
$$
$$
-\frac{1}{k_j}  \sum_{l \in G_j} \sum_{m=1}^n \sum_{z \in G_m} \ovp_j(t) (a_{hz}-a_{lz})\left(p_z(t)-\ovp_m(t)\right),
$$
whence we can rewrite \eqref{mean11} as
\begin{align*} 
\frac{d \left(p_h(t) - \overline{p}_j(t)\right)}{dt} &=  - \left[ \sum_{m=1}^n \beta d_{jm} \overline{p}_m(t)+ \delta \right](p_h(t)- \overline{p}_j(t)) \\
   & + \beta \frac{1}{k_j}  \sum_{l \in G_j}  \sum_{m=1}^n \sum_{z \in G_m} \left(a_{hz}-a_{lz}\right) \left(p_z(t)- \ovp_m(t) + \ovp_m(t) \right)\\
	& - \beta \frac{1}{k_j}  \sum_{l \in G_j}  \sum_{m=1}^n \sum_{z \in G_m} \left((p_h(t)-\ovp_j(t))a_{hz} - (p_l(t) - \ovp_j(t))a_{lz}\right)\left(p_z(t) - \ovp_m(t)\right)\\
	&  - \beta \frac{1}{k_j}  \sum_{l \in G_j} \sum_{m=1}^n \sum_{z \in G_m} \ovp_j(t) (a_{hz}-a_{lz})\left(p_z(t)-\ovp_m(t)\right).
\end{align*}
Finally, since $\frac{1}{k_j}  \sum_{l \in G_j}  \sum_{m=1}^n \sum_{z \in G_m} \left(a_{hz}-a_{lz}\right)\ovp_m(t) = 0$,  
we can consider the following system 
\begin{eqnarray}\label{mean2}
&&\frac{d \left(p_h(t) - \overline{p}_j(t)\right)}{dt}= -\left[ \sum_{m=1}^n \beta d_{jm} \overline{p}_m(t)- \delta \right](p_h(t)- \overline{p}_j(t)) \nonumber\\
&&+ \beta \frac{1}{k_j} \sum_{l \in G_j}\sum_{m=1}^n \sum_{z\in G_m} (a_{hz}-a_{lz})(p_z(t) -\overline{p}_m(t))(1-\overline{p}_j(t))
\nonumber\\
&&-\beta \frac{1}{k_j} \sum_{l \in G_j} \sum_{m=1}^n \sum_{z\in G_m} ((p_h(t)- \overline{p}_j(t))a_{h z}- (p_l(t)-\overline{p}_j(t))a_{lz})(p_z(t)-\overline{p}_m(t)),\nonumber\\ 
&&\hskip80mm  \forall h \in G_j,  \quad j=1, \ldots, n\nonumber
\end{eqnarray}}

Now let us denote by $g(t)$ the solution of \eqref{mean}, where $g: \mathbb{R} \rightarrow \mathbb{R}^N$ and consider the case where 
\begin{equation}\label{SameIniz}
p_h(0)- \overline{p}_j(0)=0,   \quad \forall h \in G_j, \quad j=1, \ldots, n,
\end{equation}
i.e., $p_h(0)=p_w(0)$ for all $h, w \in G_j$. Then, from \eqref{mean2}, we can easily see that the identically zero function $g \equiv 0$ 
is the unique solution of \eqref{mean} with initial conditions \eqref{SameIniz}.
 Indeed $g\equiv 0$ means that for all $t \geq 0$,
 $p_h(t)=p_w(t)$ for all $h,w \in G_j$, $j=1, \ldots, n$. \rev{Moreover the vector $P(t)$ such that $p_h(t)=p_w(t)$ for all $h,w \in G_j$, $j=1, \ldots, n$, is a solution of \eqref{mat2} and it is unique in $[0,1]^N$ with respect to the initial conditions \eqref{SameIniz}, \cite[Cap. 2, Sec. 2.2]{DiffEqandDynamicalSystem}. Thus we can conclude that also $g=0$ is a unique solution of \eqref{mean} in $[-1,1]^N$.}

Basically we have shown that the following subset of $I_N$
\begin{eqnarray*}
M= \left\{ P \in [0,1]^N | p_1= \ldots = p_{k_1}= \overline{p}_1,  p_{k_1+1}= \ldots = p_{k_1 + k_2}= \overline{p_2} ,\right . \nonumber\\
 \left . \ldots,  p_{(k_1+..k_{n-1}+1)}= \ldots = p_{N} =\overline{p}_n\right\}
\end{eqnarray*}
is a positively invariant set for the system \eqref{mat2}.
This allows us to reduce the system \eqref{mat2} of $N$ differential equations and describe the time-change of the infection probabilities by a system of $n$ equations involving the matrix $Q$.

 Indeed, let us consider  $P(0) \in M$ and $\overline{P}=(\ovp_1, \dots, \ovp_n) $, we can write
\begin{eqnarray}\label{eq:red_sys_1}
\frac{d\overline{p}_j(t)}{dt}& =& \beta (1-\overline{p}_j(t))\sum_{m=1}^n \varepsilon b_{jm} d_{jm} \overline{p}_m(t)  \\
			            & +& \beta d_j(1-\overline{p}_j(t))\overline{p}_j(t)- \delta \overline{p}_j(t),  \qquad j=1,\ldots,n\nonumber
\end{eqnarray}
Hence, based on Thm. 2.1 in \cite{godsil}, we observe that
\begin{equation*}
q_{ij}=(k_j/k_i)^{1/2}d_{ji},
\end{equation*}
This relation in our case brings
\begin{equation*}
 d_{jm}=\left(\frac{k_j}{k_m}\right)^{- 1/2}\frac{ q_{mj}}{\eps}=\left(\frac{k_j}{k_m}\right)^{- 1/2}\frac{q_{jm}}{\eps},
\end{equation*}
 where the last equality holds because $Q$ is symmetric. We can rewrite (\ref{eq:red_sys_1}) as 
%
\begin{eqnarray}\label{eq:red_sys_2}
\frac{d\overline{p}_j(t)}{dt}&& = \beta(1-\overline{p}_j(t))\sum_{m=1, m \neq j}^n \left(\frac{k_j}{k_m}\right)^{- 1/2}%
q_{jm} p_m(t) \nonumber \\
                  && +\beta q_{jj}(1-\overline{p}_j( t))p_j(t)- \delta \overline{p}_j(t) ; \qquad j=1,\ldots,n
\end{eqnarray}
where $q_{jj}=d_{jj}= \lambda_1(C_{V_j})$.
The matrix representation of  \eqref{eq:red_sys_2}  
is the following
\begin{equation}\label{eq:red_sys_3}
\frac{d\overline{P}(t)}{dt}= \beta \left({\rm I}_n- \operatorname{diag} (\overline{p}_j(t)) \right) \widetilde Q \overline{P}(t) - \delta \overline{P}(t),
\end{equation}
where $\widetilde Q= \operatorname{diag}\left(\frac{1}{\sqrt{k_j}}\right) Q\operatorname{diag} (\sqrt{k_j})$. It is immediate to observe that $\sigma(Q)= \sigma(\tilde{Q})$.%
%
\end{proof}

\begin{corollary}\label{corSteady}
When $\tau > \tau^{(1)}_c$ the non-zero steady-state $P_\infty$ of the system \eqref{mat2} belongs to $M -\left\{0\right\}$.
\end{corollary}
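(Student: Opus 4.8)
The plan is to exploit the symmetry of the problem: the set $M$ is positively invariant (by Thm.~\ref{reduction}), and the non-zero equilibrium $P_\infty$ is globally attracting on $I_N-\{0\}$ (by Thm.~\ref{thresh}), so I want to locate $P_\infty$ by following a trajectory that starts inside $M$. First I would pick any initial condition $P(0)\in M-\{0\}$, for instance the constant vector $P(0)=c\,u$ with $0<c\le 1$; this clearly lies in $M$ (all nodal probabilities within a cell are equal — indeed all are equal). By Thm.~\ref{reduction} the solution $P(t)$ remains in $M$ for all $t>0$, since $M$ is positively invariant for \eqref{mat2}. On the other hand, since $\tau>\tau_c^{(1)}$, Thm.~\ref{thresh} guarantees that $P(t)\to P_\infty$ as $t\to\infty$.

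The second step is to conclude that $P_\infty\in M$. Here I would use that $M$ is a closed subset of $I_N$ (it is the intersection of $I_N$ with a linear subspace, hence closed), and that a convergent trajectory entirely contained in a closed set has its limit in that set. Therefore $P_\infty=\lim_{t\to\infty}P(t)\in M$. Since $P_\infty\neq 0$ (it is the non-zero equilibrium), we obtain $P_\infty\in M-\{0\}$, which is exactly the claim.

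The only point requiring a little care — and the main (mild) obstacle — is making sure the hypotheses of Thm.~\ref{thresh} are genuinely available here: one needs $P(0)\in I_N-\{0\}$, which holds for $P(0)=c\,u$ with $c\in(0,1]$, so the global attractivity of $P_\infty$ applies. One should also note that the conclusion does not depend on the particular choice of $P(0)\in M-\{0\}$, since by Thm.~\ref{thresh} the limit $P_\infty$ is the \emph{same} equilibrium for every initial datum in $I_N-\{0\}$; the role of choosing $P(0)\in M$ is purely to force the whole orbit, and hence its limit, to stay in $M$. No further computation is needed: the reduced system \eqref{eq:red_sys_3} is not even required for this corollary, although it shows that the restriction of the dynamics to $M$ is governed by the $n$-dimensional system, so that $P_\infty$ (equivalently its cell-averaged version $\overline P_\infty$) can actually be computed there.
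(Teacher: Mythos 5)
Your proposal is correct and follows essentially the same route as the paper: combine the global attractivity of $P_\infty$ on $I_N-\{0\}$ from Thm.~\ref{thresh} with the positive invariance of $M$ from Thm.~\ref{reduction} to conclude the limit lies in $M-\{0\}$. Your version is in fact slightly more careful than the paper's, since you make explicit the choice of a non-zero initial datum in $M$ and the closedness of $M$ needed to pass to the limit.
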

\so{
\begin{proof}
In Theorem \ref{thresh} we have shown that when $\tau >\tau^{(1)}_c$, the system \eqref{mat2} has a  globally asymptotically stable equilibrium $P_\infty$ in $I_N - \left\{0\right\}$; hence for any initial state $P(0) \in I_N - \left\{0\right\}$ 
\begin{equation*}
\lim_{t \rightarrow \infty} ||P(t) - P_{\infty}||= 0.
\end{equation*}
We have  proved in Thm. \ref{reduction} that if $P(0) \in M$ then $P(t) \in M$ for all $t> 0$, thus we can conclude that $P_\infty$ must be in $M- \left\{0\right\}$ when $\tau >\tau^{(1)}_c$. 
\end{proof}}

\so{Basically, Corollary~\ref{corSteady} says that one can compute the $n \times 1$ vector, $\overline{P}_{\infty}$, of the reduced system  \eqref{eq:red_sys_3} in order to obtain the $N \times 1$ vector, $P_{\infty}$, of \eqref{mat2}: indeed   $ p_{z \infty}, \ldots , p_{x \infty}=\overline{p}_{j \infty}$,  for all $z, x \in G_j$ and $j=1, \ldots, n$. This provides a computational advantage by solving a system of $n$ equations instead of $N$.}
Moreover, since $P_{\infty}$ is a globally asymptotically stable equilibrium in $I^N-\left\{0\right\}$, the trajectories starting outside $M$ will approach those starting in $M-\left\{0\right\}$. The same holds clearly for trajectories starting in $I^N$ and in $M$ when $\tau \leq \tau^{(1)}_c$. Numerical experiments in Fig.~\ref{fig:averaged} depict this fact.

\so{The statements proved above can be easily \rev{verified}, with a direct computation, in the simple case of graphs considered in \cite{Bonaccorsi} (see Remark \ref{rem2}). Indeed for all $h,w \in G_j$, $j=1, \ldots, n$, we have
\begin{eqnarray}\label{compexp}
\frac{d(p_h(t)-p_w(t))}{dt}=&& -\, \delta \, (p_h(t)-p_w(t)) + \beta \sum_{z \notin G_j} \left[(1-p_h(t)) a_{hz} - (1-p_w(t))a_{wz}\right] p_z(t)\nonumber\\
&&+ \, \beta \ \sum_{z \in G_j,  z \neq h,w} \left[(1-p_h(t)) a_{hz} - (1-p_w(t))a_{wz}\right] p_z(t) \nonumber\\
&&+\, \beta \sum_{z= h,w} \left[(1-p_h(t)) a_{hz} - (1-p_w(t))a_{wz}\right] p_z(t)
\end{eqnarray}
Since in this special case $a_{hz}=a_{wz}$, for all $z \in V$ s.t.  $z \neq h,j$, we can rewrite \eqref{compexp} as
\begin{equation*}
\frac{d(p_h(t)-p_w(t))}{dt}= - \left[\delta + \beta\left( \sum_{z=1, z \neq h,w} ^N a_{hz} p_z(t) + 1 \right) \right] \left(p_h(t)-p_w(t)\right).
\end{equation*}
whence
\begin{equation*}\label{sol}
p_h(t)-p_w(t)= \left(p_h(0)-p_w(0)\right) e^{-\int_0^{t} \delta + \beta\left( \sum_{z=1, z \neq h,w} ^N a_{hz} p_z(s) + 1\right) ds }.
\end{equation*}
Thus if $p_h(0)=p_w(0)$ for the uniqueness of solution it will occur $p_h(t)=p_w(t)$ for all $t>0$, as we have proved in Thm. \ref{reduction}, but if the initial conditions are different, the distance between $p_w(t)$ and $p_z(t)$ decreases exponentially.}

\begin{remark}
\fdp{The framework of quotient graphs extends the NIMFA model to graphs with prescribed community network structure. It reduces to the original NIMFA model when $k_j=1$ for all $j=1,..,n$.}
\end{remark}

\subsection{\so{Steady-state}}

We focus now on the \so{computation of the} steady-state \rev{$P_\infty = \big(p_{i\infty} \big)_{i=1,\dots,N}$} of system \eqref{mat2}.
\so{To this aim, by Corollary \ref{corSteady}, we can compute the steady-state \rev{$\overline{P}_\infty = \big(\ovp_{j\infty} \big)_{j=1,\dots,n}$} of the reduced system \eqref{eq:red_sys_3}  and obtain}
\begin{equation*}
 \beta(1-\ovp_{j \infty})\sum_{m=1}^n \left(\frac{k_j}{k_m}\right)^{- 1/2}
q_{jm} \ovp_{m\infty}- \delta \ovp_{j \infty}=0, \qquad j=1, \dots, n
\end{equation*}
\so{whence
\begin{eqnarray}\label{meta}
\ovp_{j \infty} &=& \frac{\beta \sum_{m=1}^n \left(\frac{k_j}{k_m}\right)^{- 1/2}
q_{jm} \ovp_{m \infty}}{\beta \sum_{m=1}^n \left(\frac{k_j}{k_m}\right)^{- 1/2}
q_{jm} \ovp_{m \infty}+ \delta}\nonumber = 1-\frac{1}{1+ \tau \sum_{m=1}^n \left(\frac{k_j}{k_m}\right)^{- 1/2}
                  q_{jm} \ovp_{m \infty}}\nonumber \\
   \hskip -2mm         &=& 1-\frac{1}{1+\tau g_j\left(\overline{P}\right)}
\end{eqnarray}}
\so{where
{\small\begin{equation*}
g_j\left(\overline{P}\right):=\left(d_{jj}+ \eps\sum_{m=1}^n \left(\frac{k_j}{k_m}\right)^{- 1/2} \sqrt{d_{jm}d_{mj}}\right) 
   -\sum_{m=1}^n \left(\frac{k_j}{k_m}\right)^{- 1/2}\!\!q_{jm}(1-\ovp_{m\infty}).
\end{equation*}}}

\so{From \eqref{meta} follows that the steady-state infection probability of any node $j$ is bounded by
\begin{equation}\label{boundmeta}
0 \leq \ovp_{j \infty} \leq 1-\frac{1}{1+\tau(d_{jj}+ \eps\sum_{m=1}^n \left(\frac{k_j}{k_m}\right)^{- 1/2} \sqrt{d_{jm}d_{mj}})},
\end{equation}
where the inequality holds true because $\ovp_{j \infty} \in [0,1]$ for all $j=1, \dots, n$.}

\so{By introducing $1-\ovp_{m \infty}=\frac{1}{1+\tau \sum_{z=1}^n \left(\frac{k_m}{k_z}\right)^{- 1/2} q_{mz} \ovp_{z \infty}}$ in \eqref{meta}, we can express $\ovp_{j \infty}$ as a continued fraction iterating the formula
\begin{equation*}
x_{j,s+1}=f(x_{1;s},..,x_{n;s})
=1- \frac{1}{1+ \tau g_j(x_{1;s},..,x_{n;s})}, 
\end{equation*}}
\so{As showed in~\cite{VanMieghem2009}, after a few iterations of the formula above, one can obtain a good approximation of $\ovp_{j \infty}$, with a loss in the accuracy of the calculation around $\tau=\tau_c$. Ultimately, such numerical estimation can be used to improve the bound in (\ref{boundmeta}).
}

\so{If we consider a regular graph where communities have the same number of nodes, then  
\begin{equation*}\label{eq:approx}
\ovp_{j \infty}= 1-\left(1/\tau \left(d_{jj}+ \eps\sum_{m=1}^n \left(\frac{k_j}{k_m}\right)^{- 1/2} \sqrt{d_{jm}d_{mj}}\right)\right)
\end{equation*}
is the exact solution of \eqref{meta}.}

Now let  
$ r_j= d_{jj}+ \eps\sum_{m=1}^n \left(\frac{k_j}{k_m}\right)^{- 1/2} \sqrt{d_{jm}d_{mj}}$ and $r(1)=\min_j r_j$;
relying on the estimate $\ovp_{j \infty} \approx 1-\left(1/\tau r_j\right)$ we can express the steady-state average fraction  of infected nodes $y_{\infty}(\tau)=(1/N)\sum_{j=1}^n k_j p_{j \infty}(\tau)$ by

\begin{equation}\label{frac}
y_{\infty}(\tau) \approx 1 - \frac{1}{\tau N} \sum_{j=1}^n k_j \frac{1}{d_{jj}+ \eps\sum_{m=1}^n \left(\frac{k_j}{k_m}\right)^{- 1/2}\sqrt{d_{jm}d_{mj}}}.
\end{equation}

\rev{According to the analysis reported in \cite{VanMieghem2009}, approximation \eqref{frac} becomes the more precise the more the difference $r(2)-r(1)$ is small, where $r(2)$ is the second smallest of the $r_j$'s.
 In Sec.~\ref{exp} we report on some related numerical experiments.}

\subsection{Examples}\label{ex}

In Fig.~\ref{fig:fig11} we provide an example of a graph which has an equitable partition {with respect to } $V_1=\{v_1\}$, $V_2=\{v_2,v_3\}$, $V_3=\{v_3,v_4,v_5,v_6\}$, $V_4=\{v_7,v_8,v_9,v_{10},v_{11},v_{12},v_{13}\}\}$. 

The corresponding quotient matrix reads 
\[
Q=
\begin{bmatrix}
0&\varepsilon \sqrt{2} & \varepsilon 2 & 0 \\
\varepsilon \sqrt{2} & 1 &\varepsilon \sqrt{2} & \varepsilon \sqrt{3} \\
\varepsilon 2 & \varepsilon \sqrt{2} & 2 & 0\\
0&\varepsilon \sqrt{3} & 0 & 3 \\
\end{bmatrix}
\]
\so{From ~\eqref{eq:red_sys_3} we have that the steady-state can be computed by
\begin{equation*}\label{matM}
\overline{P}_{\infty}= \frac{\beta}{\delta}({\rm I}_n - \diag (\overline{p}_{\infty}))\diag(s_j) Q  \diag(1/s_j) \overline{P}_{\infty} , 
\end{equation*}
where \rev{$s_j$ is the $j$-th entry} of vector $s=(1, \sqrt{2}, 2, \sqrt{6})$.}

\subsection{Numerical experiments}\label{exp}
 In Figures~\ref{fig:fig2} and~\ref{fig:fig3} \so{we provide a comparison between the solution of 
the reduced ODE system \eqref{eq:red_sys_3} for the graph in Fig. \ref{fig:fig11} and the 
averaged $50\cdot 10^4$ sample paths resulting from a discrete event simulation} \rev{of the exact SIS process}.  \rev{The discrete event simulation is based on the generation} of independent Poisson processes for both the infection of healthy nodes and the recovery of infected ones. We observe that, as expected, NIMFA provides an upper bound to the dynamics of the infection probabilities. Also, in Fig.~\ref{fig:fig2} we observe that the dynamics for the communities that are initially healthy is characterized by a unique maximum for the  infection probability, which decreases afterwards. The communities initially infected, conversely, show a monotonic decrease of the infection probability.


\begin{figure}[th!]	
\centering
	\includegraphics[width=0.750\textwidth]{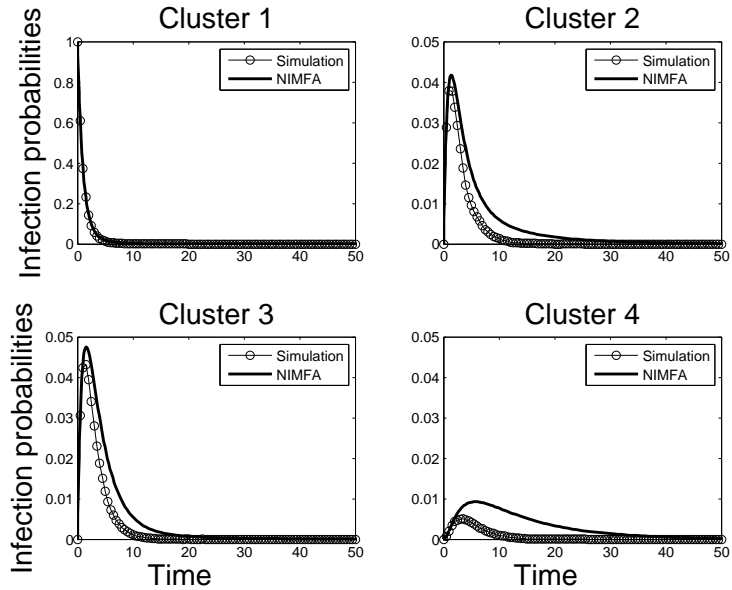}
	\caption{Dynamics of infection probabilities for each community of the network in Fig.\ref{fig:fig11}: simulation versus numerical solutions of (\ref{eq:red_sys_3}); $\tau = \beta/\delta < \tau_c^{(1)}=0.3178$, with $\beta=0.29$ and $\delta=1$, \rev{$\eps=0.3$}. At time $0$ the only infected node is node $1$.} 
	\label{fig:fig2}
\end{figure}

\begin{figure}[th!]	
	\centering
	\includegraphics[width=0.75\textwidth]{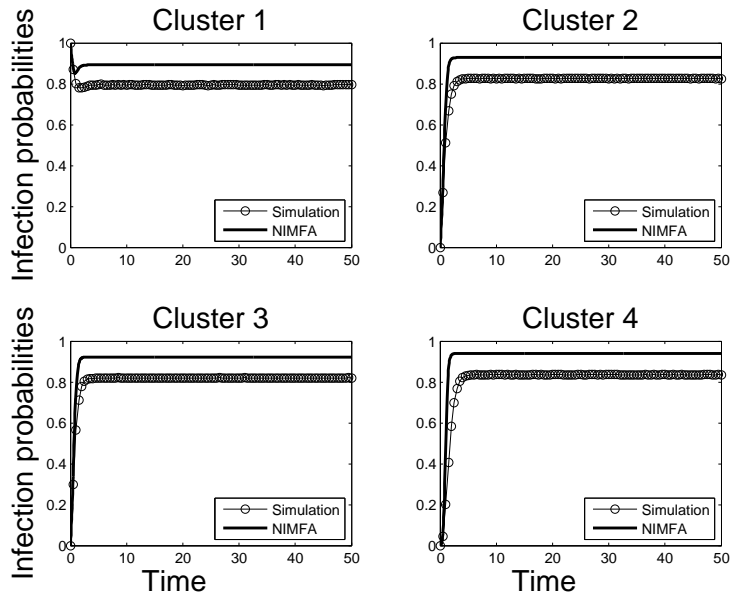}
	\caption{Dynamics of infection probabilities for each community of the network in Fig.\ref{fig:fig11}: simulation versus numerical solutions of (\ref{eq:red_sys_3}); $\tau = \beta/\delta > \tau_c^{(1)}=0.3178$, with $\beta=1.5$ and $\delta=0.3$, \rev{$\eps=0.3$}; initial conditions as in Fig.~\ref{fig:fig2}.} \label{fig:fig3}
\end{figure}

\rev{Fig.~\ref{fig:completo2} depicts the same comparison in the case of} a network with eighty nodes partitioned into four communities; each community is a complete graph and all nodes belonging to two linked communities are connected (see Remark~\ref{rem2}). The agreement between NIMFA and simulations improves compared to Fig.~\ref{fig:fig3}. This is expected, because the accuracy of NIMFA is known to increase with network order $N$, under the assumption that the nodes' degree also increases with the number of nodes. Conversely, it is less accurate, e.g., in lattice graphs or regular graphs with fixed degree not depending on $N$ ~\cite{VanMieghem2009,Accuracy}.

\begin{figure}[t]	
	\centering
	\includegraphics[width=0.80\textwidth]{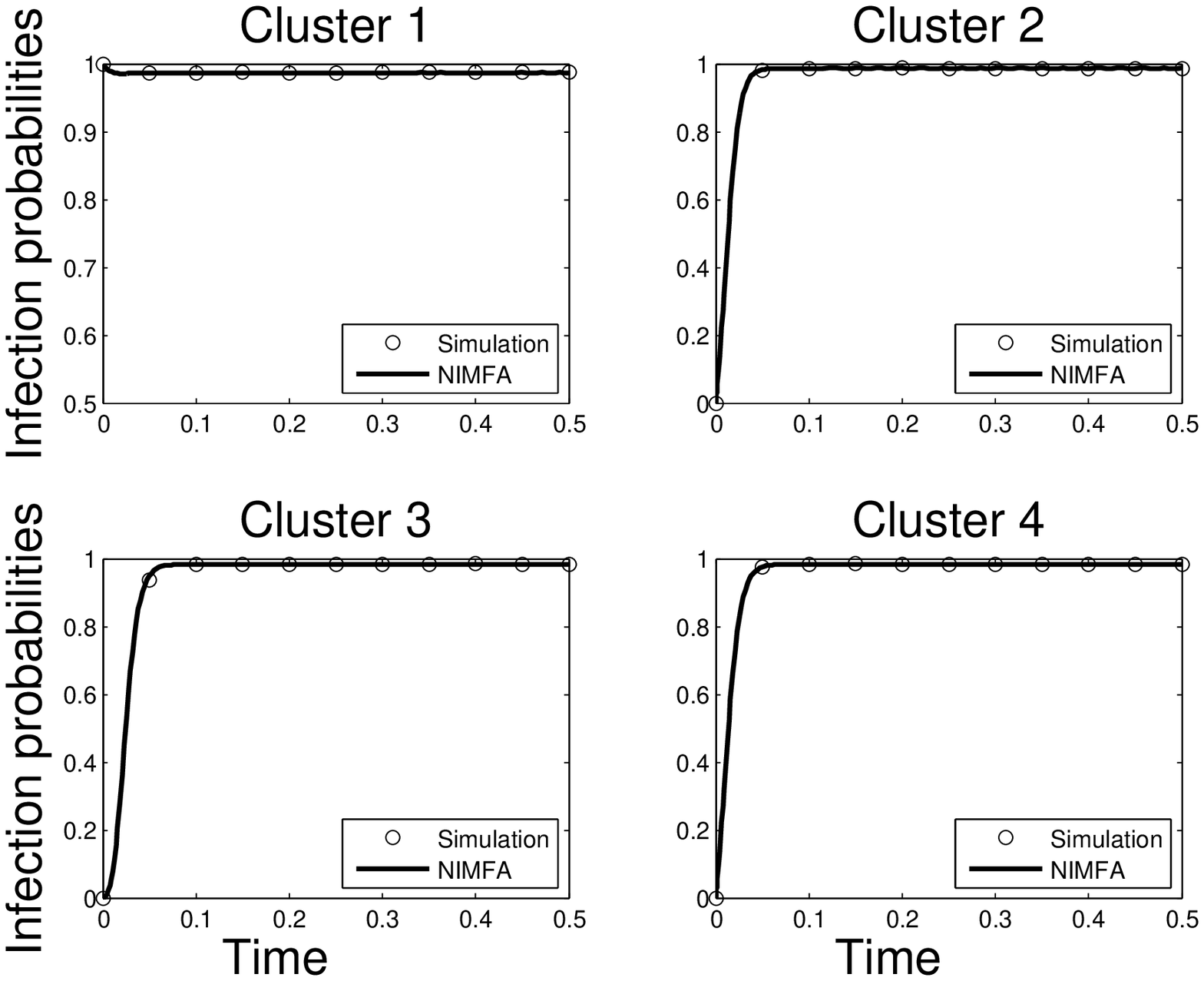}
	\caption{\so{Infection probabilities for each community in a network with $N=80$, $d_{ii}=k_i-1=19$ and $d_{ij}=20$, for all $i,j=1,..,4$: simulation versus numerical solutions of (\ref{eq:red_sys_3}); $\tau = \beta/\delta > \tau_c^{(1)}=0.0348$, with $\beta=5$ and $\delta=2$, \rev{$\eps=0.3$}; at time 0 all nodes of the 1-st community are infected.}}\label{fig:completo2}
\end{figure}
\rev{Fig.~\ref{fig:averaged} depicts the solutions of system~\eqref{mat2} for each node belonging to $V_3$ in the graph of Fig.~\ref{fig:fig11}; here nodes in $V_3$ have different initial infection probabilities $p_i(0)$'s. These solutions are compared with the one computed using the reduced system ~\eqref{eq:red_sys_3}, in the case when the initial conditions for those nodes are the same, precisely equal to the mean value of the $p_i(0)$'s. As expected, trajectories starting outside invariant set $M$ described in Thm.~\ref{reduction} tend to approach the one starting in $M$ as time elapses.}
 \begin{figure}[t]
 \centering
    \begin{minipage}{0.45\textwidth}
    \includegraphics[width=\textwidth]{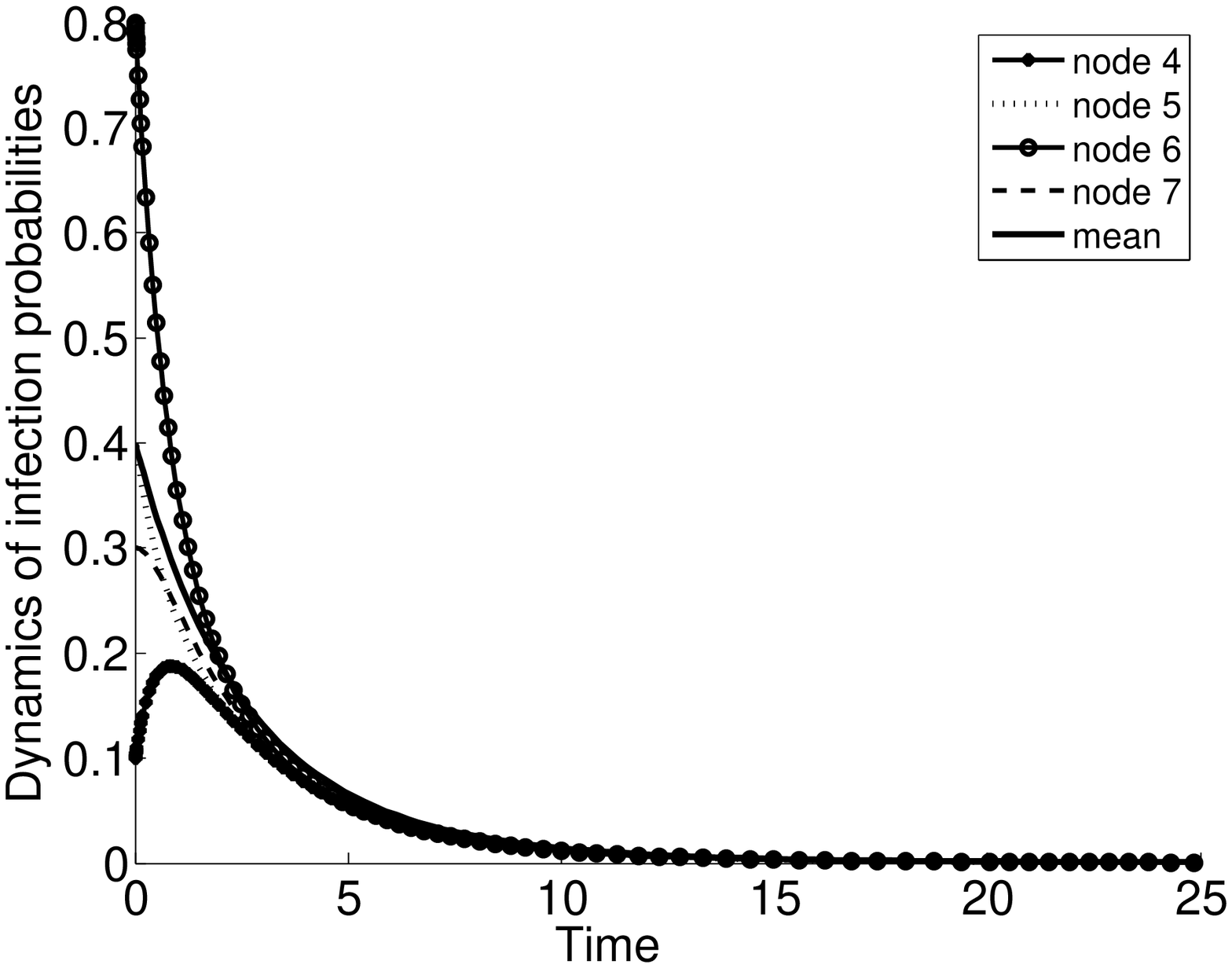}\put(-168,120){a)}
    \end{minipage}
    \hskip4mm
    \begin{minipage}{0.45\textwidth}
    \includegraphics[width=\textwidth]{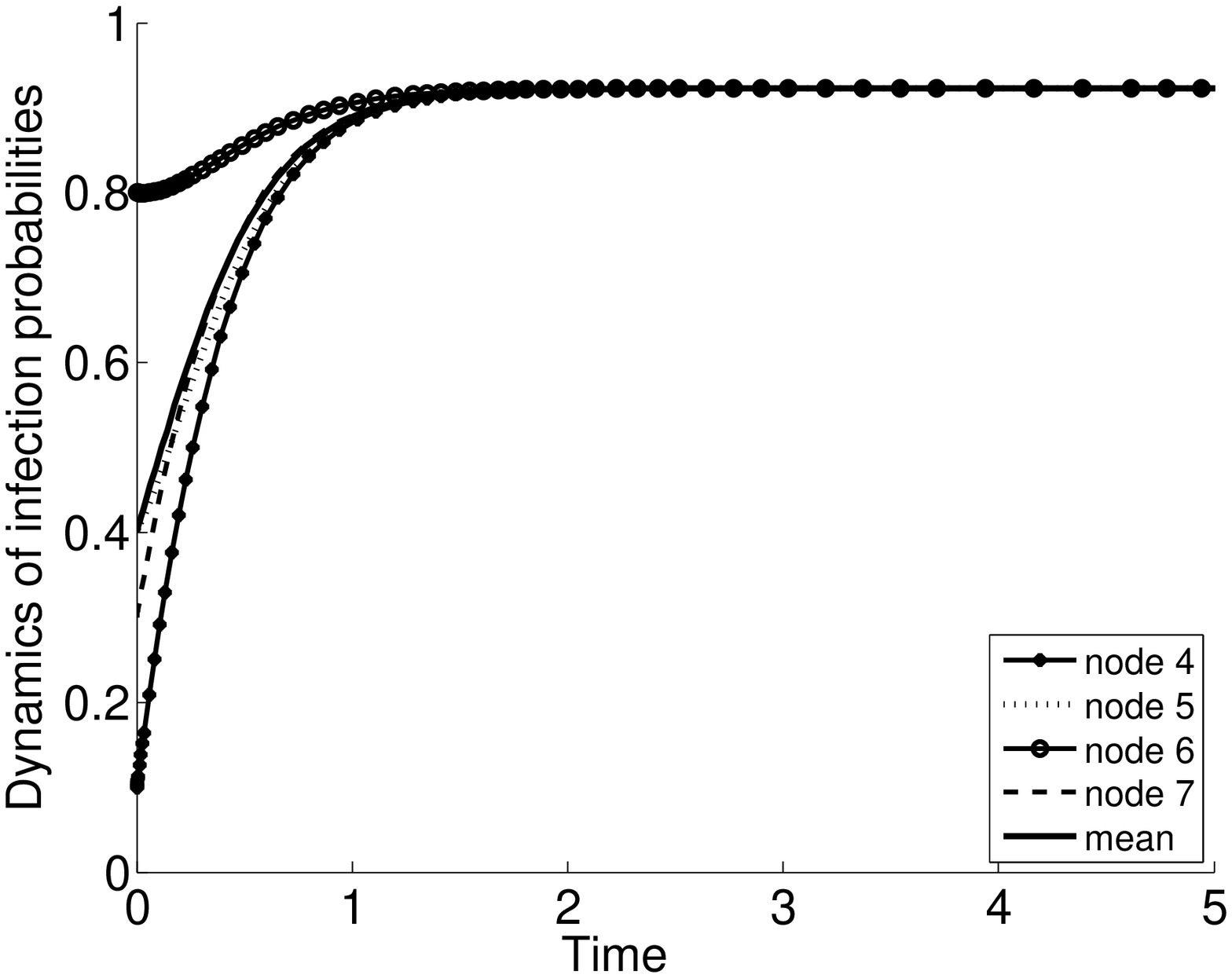}\put(-162,120){b)}
    \end{minipage}
    \caption{\so{Comparison between the dynamics of the original system~\eqref{mat2} for each of the nodes belonging to $V_3$ in Fig. \ref{fig:fig11}, for different initial conditions and the dynamics of the reduced system \eqref{eq:red_sys_3}. In the latter case the initial conditions for each node are the mean value of the $p_i(0)$s. a) case below the threshold: $\beta=0.29$, $\delta=1$, \rev{$\eps=0.3$} b) case above the threshold: $\beta=1.5$, $\delta=0.3$}, \rev{$\eps=0.3$}.}
    \label{fig:averaged}
  \end{figure}
\rev{Finally, we report on numerical experiments about the steady-state average fraction of infected nodes. More  precisely, Fig.~\ref{fig:frac1} compares the value obtained by solving the original system \eqref{eqpoints} and the value obtained from approximation \eqref{frac}, as a function of $\tau$.}

\so{In Fig. \ref{fig:SIS}, instead, we have reported on the comparison between the steady-state average fraction of infected nodes, as function of $\tau$, computed via NIMFA and via simulations. We consider a graph of regular degree $d=10$ and $N=500$, whose communities are clique, \rev{each with the same number of elements $k$. We repeat the same calculation for different values of $k$ in the communities}. As it can be observed, our model and the exact SIS model are in good agreement and the root mean square error between them decreases as $k$ increases.}

\begin{figure}[t]	
\centering
\includegraphics[width=0.60\textwidth]{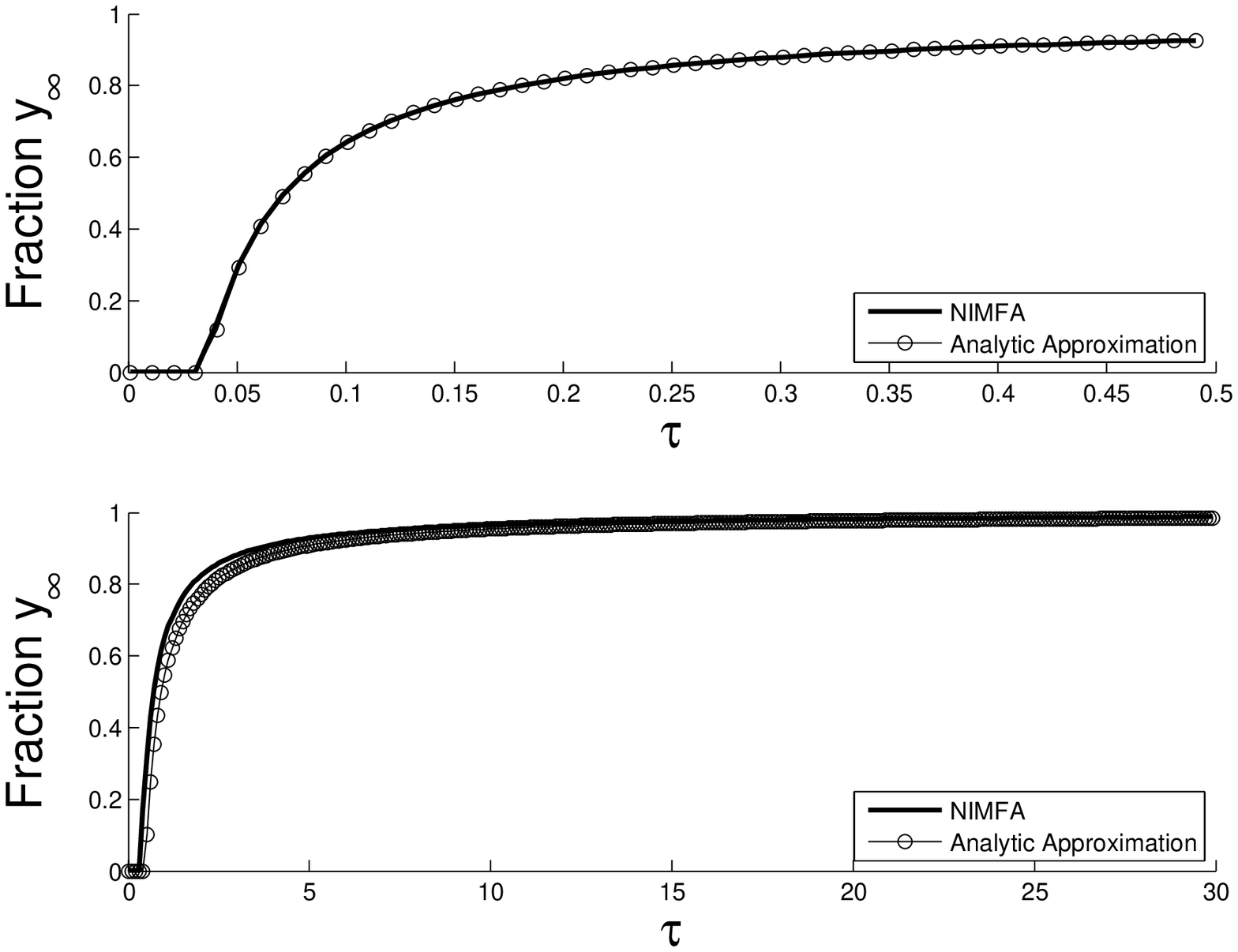}
\put(-220,160){a)}\put(-220,70){b)}
\caption{\so{Steady-state average fraction of infected nodes, for different values of $\tau$: comparison between the  approximation \eqref{frac} and the exact computation \eqref{eqpoints}; a) the graph is the one considered in Fig.~\ref{fig:fig11} and b) the one considered in Fig.~\ref{fig:completo2}.}}\label{fig:frac1}
\end{figure}
\begin{figure}[h]	
\centering
\includegraphics[height=50mm,width=0.60\textwidth]{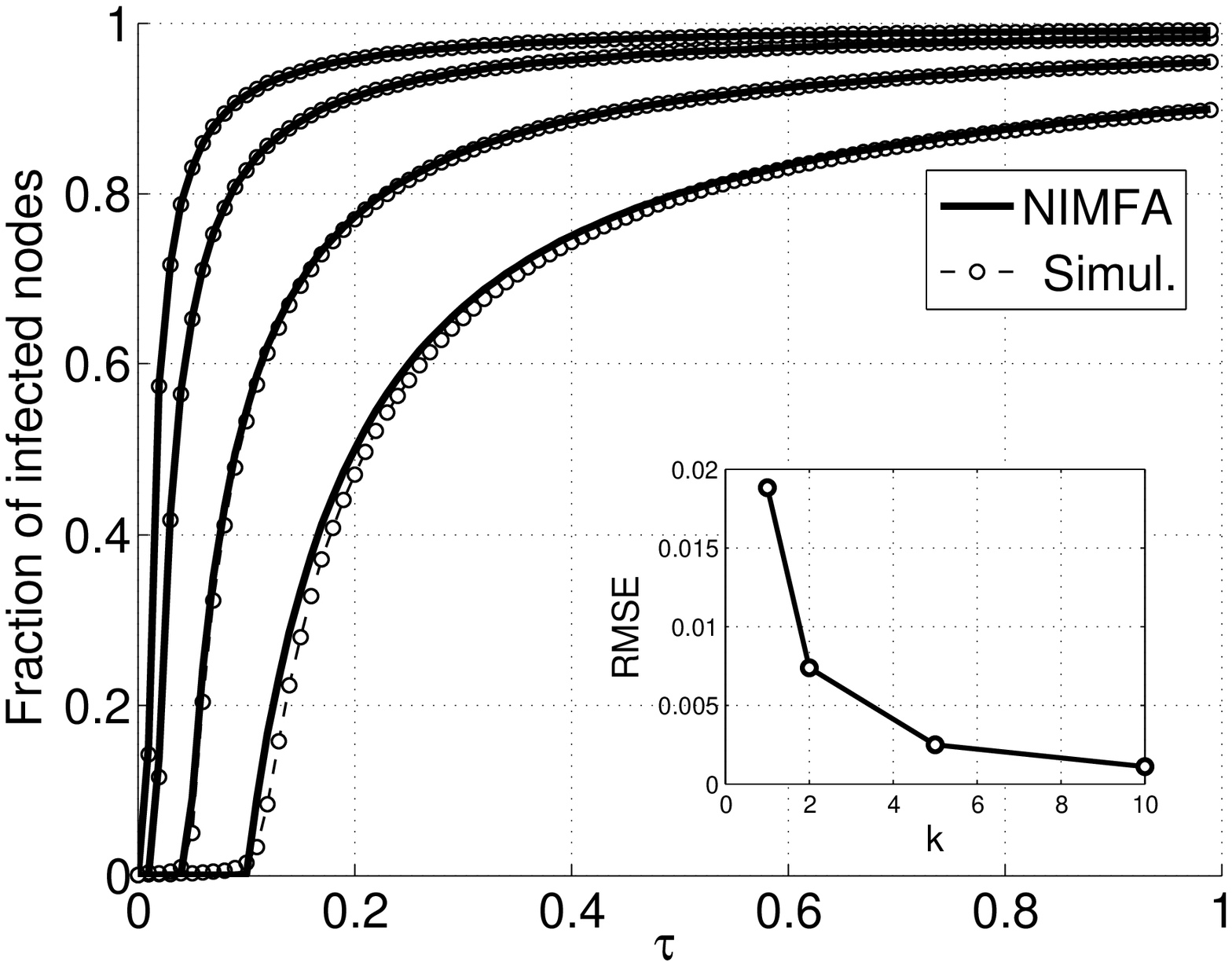}
		 \put(-120,80){\vector(-1,1){45}\hskip14mm $k=1,2,5,10$}
	\caption{\so{ Steady-state average fraction of infected nodes for different values of $k$ and $\tau$, 
	for a graph of regular degree $d=10$ and $N=500$. The internal structure of each community is a clique. Both NIMFA and simulations are shown. The inserted plot represents the root mean square error between the simulated and the approximated fraction of infected nodes.}}\label{fig:SIS}
  \end{figure}
\section{Almost equitable partitions}\label{AlmEq}

\rev{In this section we consider graphs where the partition of the vertex set is \textsl{almost equitable}. Thus, we can relax the initial assumption on the regularity of the internal community structure implied by the definition of equitable partition.}



\begin{definition}\label{de:aep}
The partition $\pi=\left\{V_1,...,V_n\right\}$ is called \emph{almost equitable} if \so{ for all $i,j \in \left\{1, \dots ,n \right\}$} with $i \neq j$, 
there is an integer $d_{ij}$ such that for all $v \in V_i$, it holds
\begin{equation*}
d_{ij}=deg(v,V_j):=\# \left\{e \in E : e=\left\{v,w\right\}, w \in V_j \right\}
\end{equation*}
independently of $v \in V_i$.
\end{definition}

\rev{The difference between equitable and almost equitable partitions is that, in the former case, subgraph $G_i$ of $G$ induced by $V_i$ has regular structure, whereas the latter definition does not impose any structural condition into $G_i$.}

Ideally we can think of a network $\tilde{G}$ whose node set has an almost equitable partition as a network $G$ with equitable partition where links between nodes in one or more communities have been added or removed.

\rev{The objective is to obtain lower bounds on threshold $\tau_c^{(1)}$, useful in determining a safety region for the extinction of epidemics. We start assuming  that links are added only.}

To this aim, let us consider two graphs $G=(V,E)$ and $\tilde{G}=(V,\tilde{E})$ with the same partition $\{V_1, \dots, V_n\}$, but different edge sets $E\varsubsetneq \tilde{E}$, and assume $G$ to have an equitable partition but $\tilde{G}$ to have merely an almost equitable partition. Then if $\tilde{A}$ and $A$ are the adjacency matrices of $\tilde{G}$ and $G$ respectively it holds
\[
\tilde{A} = A + R,
\]
where $R = \diag(R_1, \dots, R_n)$; the dimension of $R_i$ is $k_i \times k_i$ for $i=1,...,n$, as before $k_i$ is the order of $G_i$ and $n$ is the number of the communities. 

The theorem of Weyl can be applied to $\tilde{A}=A+R$ and then it yields
\begin{equation}\label{IneqA}
\lambda_1(\tilde{A}) \leq \lambda_1(A) + \lambda_1(R).
\end{equation}
\rev{In the following we shall provide a more explicit formulation of the right hand side of \eqref{IneqA} involving the number of added edges.}
\begin{figure}[t]	
\centering
\includegraphics[width=0.55\textwidth]{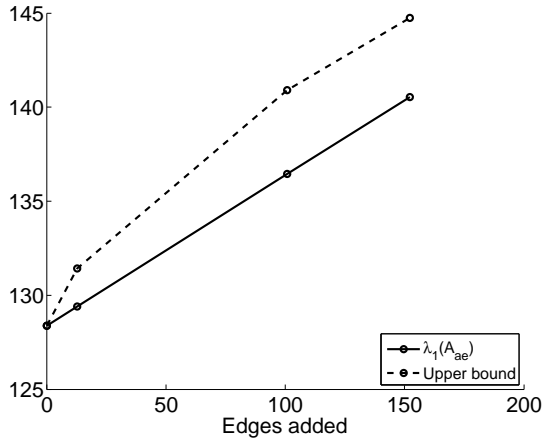}
\caption{Comparison of the bound and the spectral radius for a $40$-communities network. Each community has $k=25$ nodes, whose 
internal structure is a initially a ring; the perturbation graph is obtained by adding in each of them the same increasing number of links. The spectral radius of the adjacency matrix $\tilde{A}$ \rev{($A_{ae}$ in the legend, where the subscript ``ae'' stays for ``almost equitable'')} is compared to the upper bound as a function of the links added in each community.} \label{fig:fig4}
\end{figure}

\begin{proposition}\label{propprop}
Let $G=(V,E)$ and $\tilde{G}=(V,\tilde{E})$ be two graphs 
and consider a partition $\{V_1, \dots, V_n\}$ of the set of vertices $V$; we shall denote by  $G_i=(V_i,E_i)$ and  $\tilde{G_i}=(V_i,\tilde{E_i})$ the subgraph of $G$ and $\tilde{G}$ induced by the cell $V_i$, respectively, for $i=1,...n$. Assume this partition to be equitable for $G$ and almost equitable for $\tilde{G}$. Let $E\subset \tilde{E}$ with
\[
\tilde{E}\setminus E=\bigcup_{i=1}^n (\tilde{E_i}\setminus E_i)
\] 
(i.e., the edge sets can only differ within cells) and denote by $R$ the adjacency matrix corresponding to a graph with $\tilde{E}\setminus E$ as edge set. Finally, let us denote by $G_i^C$ the graph with edge set $\tilde{E_i}\setminus E_i$ and whose node set is simply the set of endpoints of its edges (i.e., no further isolated nodes).
\begin{enumerate}
\item If $\Delta(G_i^C)$ denotes the maximal degree in $G_i^C$, $i=1, \dots, n$,
then
\begin{equation*}
\lambda_1(R)\leq \max_{1\le i\le  n} \min \left\{\sqrt{\frac{2e_i(k_i-1)}{k_i}}, \Delta(G_i^C)\right\}\ ,
\end{equation*}
 where $e_i$ is the number of edges added to $G_i$, i.e., $e_i = (|\tilde E_i| - |E_i|)$, and $k_i$ is the number of nodes in $V_i$.
\item If additionally $G^C_i$ is connected for each $i=1, \dots, n$, then
\begin{equation*}
\lambda_1(R)\leq \max_{1\le i\le n} \min \left\{\sqrt{2e_i-k'_i+1}, \Delta(G_i^C)\right\}\ ,
\end{equation*}
\end{enumerate}
 where $k'_i$ is the number of nodes of $G_i^C$.
\end{proposition}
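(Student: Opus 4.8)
The plan is to reduce the estimate to two classical one-matrix inequalities applied block by block. Since $R=\diag(R_1,\dots,R_n)$ we have $\sigma(R)=\bigcup_{i=1}^n\sigma(R_i)$, hence $\lambda_1(R)=\max_{1\le i\le n}\lambda_1(R_i)$; and $R_i$ is, up to the addition of $k_i-k'_i$ isolated vertices (which contribute only the eigenvalue $0$), the adjacency matrix of $G_i^C$, so $\lambda_1(R_i)=\lambda_1(G_i^C)$. It therefore suffices to prove, for each fixed $i$, the scalar bounds $\lambda_1(G_i^C)\le\Delta(G_i^C)$ and $\lambda_1(G_i^C)\le\sqrt{2e_i(k_i-1)/k_i}$, and, when $G_i^C$ is connected, $\lambda_1(G_i^C)\le\sqrt{2e_i-k'_i+1}$; taking in each case the minimum of the two available bounds and then the maximum over $i$ gives exactly the two claimed inequalities.

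The bound $\lambda_1\le\Delta$ is standard: by Gershgorin's theorem every eigenvalue of the (zero-diagonal) adjacency matrix of $G_i^C$ lies within $\Delta(G_i^C)$ of $0$. For the trace bound, let $\mu_1\ge\dots\ge\mu_{k'_i}$ be the eigenvalues of $A(G_i^C)$; since $A(G_i^C)$ is a symmetric $0$--$1$ matrix with zero diagonal on $k'_i$ vertices and $e_i$ edges, $\sum_r\mu_r=\operatorname{tr} A(G_i^C)=0$ and $\sum_r\mu_r^2=\operatorname{tr}\big(A(G_i^C)^2\big)=2e_i$. Writing $\mu_1=-\sum_{r\ge2}\mu_r$ and applying Cauchy--Schwarz to the $k'_i-1$ eigenvalues $\mu_2,\dots,\mu_{k'_i}$ yields $\mu_1^2\le(k'_i-1)(2e_i-\mu_1^2)$, i.e.\ $\mu_1\le\sqrt{2e_i(k'_i-1)/k'_i}$; since $x\mapsto(x-1)/x$ is increasing and $k'_i\le k_i$, this is $\le\sqrt{2e_i(k_i-1)/k_i}$, proving part~1. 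For part~2, connectedness of $G_i^C$ makes available the sharper classical bound $\lambda_1\le\sqrt{2m-p+1}$ for a connected graph with $m$ edges and $p$ vertices (Hong's inequality), giving with $m=e_i$, $p=k'_i$ the estimate $\lambda_1(G_i^C)\le\sqrt{2e_i-k'_i+1}$; combined with $\lambda_1(G_i^C)\le\Delta(G_i^C)$ and maximized over $i$, this is the claim.

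The actual computations (the two trace identities, Cauchy--Schwarz, Gershgorin) are routine; the points that need care are the bookkeeping between $k_i$ and $k'_i$ and the role of connectedness. In part~1 the trace argument is naturally phrased through the true order $k'_i$ of $G_i^C$, and one uses monotonicity of $(x-1)/x$ to restate the bound in terms of the cell size $k_i$; in part~2 this substitution is impossible, because the summand $-k'_i+1$ would move the bound in the wrong direction, which is precisely why the connectedness hypothesis is imposed and Hong's stronger bound used instead. The only genuinely external ingredient is Hong's inequality, which is classical; everything else is self-contained.
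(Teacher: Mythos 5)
Your proof is correct and follows essentially the same route as the paper's: reduce to the block-diagonal structure $\lambda_1(R)=\max_i\lambda_1(R_i)$, bound each block by $\Delta(G_i^C)$ via Gershgorin, and invoke the classical edge-count bounds $\sqrt{2e_i(k_i-1)/k_i}$ and (for connected $G_i^C$) Hong's $\sqrt{2e_i-k'_i+1}$, exactly the two inequalities the paper cites from Van Mieghem's \emph{Graph Spectra}. The only difference is presentational: you derive the trace/Cauchy--Schwarz bound from scratch and handle the $k'_i$ versus $k_i$ bookkeeping explicitly via monotonicity of $(x-1)/x$, where the paper simply applies the cited inequality to the full $k_i\times k_i$ block.
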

\begin{proof}
(1) By assumption, $R$ is a diagonal block matrix whose blocks $R_i$ are the adjacency matrices of the induced subgraphs $G_i^C$. Thus, $\lambda_1(R)$ is the maximum of all spectral radii $\lambda_1(R_i)$. On the other hand, one has by~\cite[(3.45)]{Graph} that \begin{equation*}
\lambda_1(R_i) \leq \min \left\{\sqrt{\frac{2e_i(k_i-1)}{k_i}}, \Delta(G_i^C)\right\}.
\end{equation*}
and the claim follows.\\
(2) By Gershgorin's theorem, the spectral radius of an adjacency matrix \rev{of a graph without loops} is never larger than the graph's maximal degree, i.e., $\lambda_1(R_i)\le \Delta(G_i^C)$.
By assumption, there exists a permutation of the vertices in $V_i$ such that the matrix $R_i$ has the form
\[
R_i=
\begin{bmatrix}
 R'_{i} & \textbf{0} \\ 
  \textbf{0} & \textbf{0} \\ 
\end{bmatrix}
\]
where $R'_i$ is the adjacency matrix of a connected graph with $k_i'$ nodes
(i.e., the block $R'_i$ has dimension $k'_i \times k'_i$).
Now, we deduce from~\cite[art.~50]{Graph} that
\begin{equation*}
\lambda_1(R'_i) \leq \sqrt{2e_i -k'_i-1},
\end{equation*}
and since $\lambda(R_i)=\lambda(R'_i)$, the statement follows.
\end{proof}


By using estimate \eqref{lowbound} and Proposition~\ref{propprop} in the first and the second term on the right hand side of~\eqref{IneqA}, respectively, we deduce
\begin{equation}\label{upperAlmost}
\lambda_1(\tilde{A}) \leq 
\max_{1\le i\le n} \lambda_1(C_{V_i}) 
+ \lambda_1(\hB) +  \max_{1\le i\le n} \min \left\{\sqrt{\frac{2e_i(k_i-1)}{k_i}}, \Delta(G_i^C)\right\}.
\end{equation}
The inequality in (\ref{upperAlmost}) gives us a lower bound for the epidemic threshold in the case of a graph whose partition of nodes set is almost equitable. Actually
\begin{equation}\label{boundAE}
\tau_c^{(1)}=\frac{1}{\lambda_1(\tilde{A})} \ge \tau^\star =
\frac{1}{\max\limits_{1\le i\le n} \lambda_1(C_{V_i}) + \lambda_1(\hB)+  \max\limits_{1\le i\le n} \min \left\{\sqrt{\frac{2e_i(k_i-1)}{k_i}}, \Delta(G_i^C)\right\}}.
\end{equation}

Now let us consider the case where we remove edges, inside the communities, in a network whose set nodes has an equitable partition, thus because the spectral radius of an adjacency matrix is monotonically non increasing under the deletion of edges, we have 
$$\lambda_1(\tilde{A}) \leq \lambda_1(A)$$
whence 
\begin{equation*}\label{boundAEdel}
\frac{1}{\lambda_1(\tilde{A})} \geq \frac{1}{\lambda_1(A)} \geq  \min_{i} \frac{1}{d_{ii} + \lambda_1(\hB)}.
\end{equation*}

The bounds developed so far support the design of community networks with safety region for the effective spreading rate, that guarantees the extinction of
epidemics.
E.g. if we consider some  $G_i$, $i=1,\ldots,n$, it is possible to connect them such in a way to form a graph $\tilde{G}=(V,\tilde{E})$ with an almost equitable partition. Now, any subgraph obtained from $\tilde{G}$, by removing edges inside the communities, will have smaller spectral radius than $\tilde{G}$, and consequently a larger epidemic threshold. Thus the lower bound in \eqref{boundAE} still holds.

\section{Conclusion}

\rev{In this work we have discussed the relation between the epidemic threshold of a given graph with equitable partitioning of its node set, and the spectral properties of the corresponding quotient matrix.} Because the quotient matrix $Q$ has the same spectral radius of $A$, \rev{this may lead to a significative computational advantage in the calculation} of $\lambda_1(A)$ and, consequently, of $\tau_c^{(1)}$, since the order of $Q$ is smaller than that of $A$. 

\rev{A novel expression has been derived for the lower bound on $\tau_c^{(1)}$ as function of network metrics, e.g., the maximum among the internal degrees of the nodes over all communities. In practice this value can be adopted to determine a safety region for the extinction of epidemics, i.e., by forcing the effective spreading rate below the lower bound; it can be also useful in order to design new network architectures  
robust to long-term, massive infections.} 

\rev{In the analysis, we have showed} that it is possible to reduce the number of equations representing the time-change of infection probabilities using the quotient matrix $Q$, when all nodes belonging to the same community have the same initial conditions. After proving the existence of a positively invariant set for the original system of $N$ differential equations, we have shown that the non-zero steady-state infection probabilities \rev{belongs to this invariant set, and that it can be computed by the reduced system of $n$ equations}.

Finally, we have also considered the case when the partition is almost equitable. \rev{An input graph whose partition is equitable can be perturbed, by adding or removing edges inside communities, in order to obtain a graph with an almost equitable partition. A lower bound for the epidemic threshold
has been derived, and the effect of perturbations of the communities' structure has been explored.}


\subsection*{Acknowledgments}
The authors would like to thank Piet Van Mieghem for providing interesting and useful comments to an 
early draft of this work. \rev{The helpful comments of two anonymous reviewers are also 
gratefully acknowledged.}


\end{document}